\newcommand{\blind}{0}
\numberwithin{equation}{section}
\newtheorem{theorem}{Theorem}[section]
\newtheorem{lemma}{Lemma}[section]
\newtheorem{rem}{Remark}[section]
\newtheorem{col}{Corollary}[section]
\newcommand{\pn}{\mathbb{P}_n}
\newcommand{\gn}{\mathbb{G}_n}
\newcommand{\E}{\mathrm{E}}
\newcommand{\pr}{\mathrm{P}}
\newcommand{\var}{\operatorname{var}}
\newcommand{\argmin}{\mathop{\operatorname{argmin}}_}
\newcommand{\X}{\mathrm{X}}
\newcommand{\spt}{\operatorname{spt}}
\begin{document}

\def\spacingset#1{\renewcommand{\baselinestretch}%
{#1}\small\normalsize} \spacingset{1}


\if0\blind
{
  \title{\bf Heterogeneous Overdispersed Count Data Regressions via Double Penalized Estimations}
  \author{\textbf{Shaomin Li}\footnote{Center for Statistics and Data Science, Beijing Normal University at Zhuhai. Email: \texttt{lsmjim@bnu.edu.cn} (Shaomin Li). This work is supported in part by National Natural Science Foundation of China Grants 12101056.}, \quad \textbf{Haoyu Wei}\footnote{Corresponding Author. Guanghua School of Management, Peking University. Email: \texttt{cute@pku.edu.cn} (Haoyu Wei). \quad The first two authors are co-first authors, in alphabetical order.}, \quad \textbf{Xiaoyu Lei}\footnote{Department of Statistics, University of Chicago. Email: \texttt{leixy@uchicago.edu} (Xiaoyu Lei)}
}
  \maketitle
} \fi

\if1\blind
{
  \bigskip
  \bigskip
  \bigskip
  \begin{center}
    {\LARGE\bf Title}
\end{center}
  \medskip
} \fi

\bigskip
\begin{abstract}
This paper studies the non-asymptotic merits of the double $\ell_1$-penalty for heterogeneous overdispersed count data via negative binomial regressions. Under the restricted eigenvalue conditions, we prove the oracle inequalities for Lasso estimators of two partial regression coefficients for the first time, using concentration inequalities of empirical processes. Furthermore, derived from the oracle inequalities, the consistency and convergence rate for the estimators are the theoretical guarantees for further statistical inference. Finally, both simulations and a real data analysis demonstrate that the new methods are effective.
\end{abstract}

\noindent%
{\it Keywords:} Negative binomial regressions; Heterogeneous count data regression; Estimation of dispersion parameter; Oracle inequalities.
\vfill

\newpage
\spacingset{1.8} 
\section{Introduction}

With the advance of modern data collection techniques, scientists and engineers generate or get access to a massive number of variables in their experiments, and challenges to traditional statistical methods and theories have come. One crucial aspect is the high-dimensional setting, where the number of covariates can be comparable to or greater than the sample size. In high-dimensional settings, the asymptotical results for the estimator are intractable. For example, the maximum likelihood estimator (MLE) in classical multivariate statistics is easy to obtain since the Hessian matrices in optimizations are invertible, and classical Newton algorithms also perform well. When the dimension is high, however, the optimizations for the MLE method lead to many un-meaningful solutions, so the specific constraint for variables in the optimizations is essential. The Lasso-regularization method is one of the constrained least-squares methods widely applied in the high-dimensional parameter estimation problem; see \cite{1996Regression}.

In many scientific fields such as biomedical science, ecology, and economics, experimental and observational studies often yield count data, a type of data in which the observations can take only the non-negative integer values.
The Poisson regression models are commonly used for count data. However, it needs a restrictive assumption that the variance equals the mean.  For many count data,  the variance is often larger than the mean \citep{dai2013maximum,zhang2018negative}, which is called over-dispersion.
Since the Poisson regression model is invalid under the over-dispersion case, a more general and flexible regression model, the negative binomial regression (NBR), has attracted lots of research attention and become popular in analyzing count data \citep{xie2020consistency}. 
Recently, there has been much research on the high-dimensional NBR model, such as  \cite{Qiu17, weissbach2020consistency, tian2020seasonal}. However, these works heavily rely on the distributional assumption of the count data with the pre-specified dispersion parameter. 

Most studies on NBR assumed the dispersion parameter as a constant. In practice, however, not all models satisfy the assumption. 
Thus the need to model the dispersion parameter as a function of some covariates.
The heterogeneous negative binomial regression (HNBR) extends the NBR by observation-specific parameterization of the dispersion parameter~\citep{Hilbe11}.
HNBR is a valuable tool for assessing the source of overdispersion. 
It belongs to the double generalized linear models (DGLMs) or vector generalized linear models (VGLMs), which are very useful in fitting more complex and potentially realistic models  \citep{yee2015vector,nguelifack2019robust}.
However, little work has been done to select the dispersion explanation variables.

In this paper, we study the variable selection and dispersion estimation for the heterogeneous NBR models. To the best of our knowledge and based on the literature, this study is the first.  
Specifically, we propose a double regression to estimate the coefficients of NB dispersion and NBR simultaneously. 
Because of the high dimension of the covariates, we apply a double $\ell_1$ penalty to both regressions.   
The two adjustment parameters we set are different because the first-order conditions for estimating the regression coefficients are entirely different from those for estimating the dispersion parameters.
We construct an algorithm to do variable selection and dispersion estimation simultaneously.
Similar studies on high-dimensional NBR models include \cite{wang2016penalized}, which assumed the dispersion parameter as a constant. 
Their method requires an iterative algorithm to estimate the mean regression and dispersion alternatively and implement lasso in each iteration.
If there are many iterations, such an algorithm is a waste of computing resources.

The rest of the paper is organized as follows. Section 2 introduces the heterogeneous overdispersed count data model and defines the double $\ell_1$-penalized estimators for the mean and dispersion regressions. Then we use a technique called the stochastic Lipschitz condition to derive the asymptotic results in Section 3.  Simulation studies and a real data application are given in Section 4. Finally, section 5 concludes the article with a discussion. All proofs and technical details are provided in the Appendix.

\section{Double $\ell_1$-penalized NBR}
\label{sec-asymptotics}

\subsection{Heterogeneous overdispersed count data regressions}

Suppose we have $n$ count responses ${Y_i}$ and  $p$-dimensional covariates $X_i = ({x_{i1}}, \cdots ,{x_{ip}})$, $i \in [n] := \{1, 2, \ldots, n \}$. For the Poisson regression models, the response obeys the Poisson distribution
\begin{equation*}
    \pr ({Y_i=y_i}{\left| \lambda  \right._i}) = \frac{{\lambda _i^{{y_i}}}}{{{y_i}!}}{e^{ - {\lambda _i}}}, \qquad i \in [n].
\end{equation*}
With ${\lambda _i} = {\rm{E}}{(Y_i)}$, we require that the positive parameter ${\lambda _i}$ is related to a linear combination of $p$ covariates. A plausible assumption for the link function is $\eta ({\lambda _i}) = \log (\lambda _i) = {X_i^{\top}}\beta$.
It is worth noting that
$\mathrm{E}(Y_i| X_i)=\mathrm{var}(Y_i|X_i)=\exp({ X_i^{\top}}\beta)>0.$

For the traditional negative binomial regression, it assumes that the count data response obeys the NB distribution with over-dispersion:
\begin{equation}\label{eq1}
    \pr({Y_{\rm{i}}} = {y_i}|{X}_i) = :f({y_i}; k ,{\mu _i}) = \frac{{\Gamma (k  + {y_i})}}{{\Gamma (k ){y_i}!}}{(\frac{{{\mu _i}}}{{k  + {\mu _i}}})^{{y_i}}}{(\frac{k }{{k  + {\mu _i}}})^k }, \qquad i \in [n],
\end{equation}
with $\E ({Y_i} | X_i ) = {\mu _i}=\exp(\beta^{\top} X_i)$ and $k$ is an unknown qualification of the overdispersion level.
When $k  \to \infty $, we have ${\rm{var}}({Y_i} | X_i) = {\mu _i} + \frac{{\mu _i^2}}{k } \to {\mu _i}{\rm{ = E(}}{Y_i} | X_i )$, the Poisson regression for the mean parameter ${\mu _i}$.
Thus the Poisson regression is a limiting case of negative binomial regression when the dispersion parameter $k$ tends to infinite. 

In the heterogeneous negative binomial regression, $k$ is proposed a specific parameterization, i.e. $k = k(X_i)$. More specifically,  we assume in this paper that
\begin{equation*}
    \mu(x) = \exp\{ \theta^{(1) \top} x\}, \qquad k(x) = \exp\{ \theta^{(2) \top} x \}.
\end{equation*}
For notation simplicity, we denote
\begin{equation*}
    P f := \E f(X_i, Y_i), \qquad \pn f := \frac{1}{n} \sum_{i = 1}^n f(X_i, Y_i), \qquad \gn f := \sqrt{n} (\pn - P)f,
\end{equation*}
for any measurable function $f$. 
\vspace{2ex}

Let $\theta=(\theta^{(1)\top},\theta^{(2)\top})^{\top}\in \mathbb{R}^{2p}$,the log-likelihood is
\begin{align*}
    n \ell (\theta) & = \log \prod \limits_{i = 1}^n {f({y_i},k_i ,{\mu _i})} = \sum\limits_{i = 1}^n {\log } \left\{ \frac{{\Gamma (k_i  + {Y_i})}}{{\Gamma (k_i ){Y_i}!}}{(\frac{{{\mu _i}}}{{k_i  + {\mu _i}}})^{{Y_i}}}{(\frac{k_i }{{k_i  + {\mu _i}}})^{k_i} }\right\} \\
    &=  \sum\limits_{i = 1}^n \bigg[-\log \frac{\Gamma \big(\exp \{ X_i^{\top} \theta^{(2)} \}\big)}{\Gamma \big( {Y_i} + \exp\{ X_i^{\top} \theta^{(2)}\}\big)} + {Y_i} X_i^\top\big(\theta^{(1)} - \theta^{(2)} \big) \\
    & \qquad \qquad \qquad \qquad \qquad \qquad - \big[ Y_i+ \exp \{ X_i^{\top} \theta^{(2)} \}\big] \log \Big( 1 + \exp \{ X_i^{\top} \big(\theta^{(1)} - \theta^{(2)} \big)\}\Big) - \log {Y_i}!\bigg]
\end{align*}
We use the negative log-likelihood as the loss function $\gamma$, and define
\begin{equation*}
    \begin{aligned}
        \gamma(\theta) & := - \log f(y \, | \, x, \theta) + \log y!.\\
    \end{aligned}
\end{equation*}

Denote $\partial_j := \frac{\partial}{\partial \theta^{(j)}}$, $j = 1, 2$, the score function for $\theta^{(1)}$ is
\begin{equation*}
    \partial_1 \ell (\theta) =  \mathbb{P}_n \partial_1 \gamma (\theta) = \frac{1}{n} \sum\limits_{i = 1}^n {({Y_i} - {{\mathop{ e}\nolimits} ^{X_i^ \top {\theta ^{(1)}}}})\frac{{{{\mathop{ e}\nolimits} ^{X_i^ \top {\theta ^{(2)}}}}{X_i}}}{{{{\mathop{ e}\nolimits} ^{X_i^ \top {\theta ^{(1)}}}} + {{\mathop{ e}\nolimits} ^{X_i^ \top {\theta ^{(2)}}}}}}}.
\end{equation*}
Furthermore, fix $\theta^{(1)}$, the score function for $\theta^{(2)}$ is
\[  \partial_2 \ell (\theta) =  \mathbb{P}_n \partial_2 \gamma (\theta) = \frac{1}{n} \sum\limits_{i = 1}^n {\left\{ {\left[ {\log \left( {1 + {{\mathop{ e}\nolimits} ^{X_i^ \top ({\theta ^{(1)}} - {\theta ^{(2)}})}}} \right) - \sum\limits_{j = 0}^{{Y_i} - 1} {\frac{1}{{j + {{\mathop{ e}\nolimits} ^{X_i^ \top {\theta ^{(2)}}}} }}} } \right]{\rm{ + }}\frac{{{Y_i} - {{\mathop{ e}\nolimits} ^{X_i^ \top {\theta ^{(1)}}}}}}{{{{\mathop{ e}\nolimits} ^{X_i^ \top {\theta ^{(1)}}}} + {{\mathop{ e}\nolimits} ^{X_i^ \top {\theta ^{(2)}}}}}}} \right\}{{\mathop{ e}\nolimits} ^{X_i^ \top {\theta ^{(2)}}}}{X_i}} .\]
It is easy to verify that 
\begin{equation*}
    P \partial_1 \ell(\theta) = P \partial_2 \ell (\theta) = 0.
\end{equation*}
So from now, we will suppose the true parameter is $\theta^*$. 

\subsection{Heterogeneous overdispersed NBR via double $\ell_1$-penalty}

The lasso estimator under our circumstance is defined as
\begin{equation}\label{lasso}
    \widehat{\theta} = \argmin{\theta \in \Theta} \big( \pn \gamma(\theta) + \lambda \| \theta\|_{\omega, 1} \big),
\end{equation}
where $\lambda > 0$ is the tuning parameter and the weighted norm is defined by 
\begin{equation*}
    \lambda \| \theta \|_{\omega, 1} =  \lambda_1 \| \theta^{(1)}\|_{1} + \lambda_2 \| \theta^{(2)} \|_{1} = \lambda \left( \omega_1 \| \theta^{(1)}\|_1 + \omega_2 \| \theta^{(2)}\|_1 \right),
\end{equation*}
and $\omega = (\omega_1, \omega_2)^{\top} = (\lambda_1 / \lambda, \lambda_2 / \lambda )^{\top} \in [0, 1] \times [0, 1]$ is the weight, 
$\|\cdot \|_1$ means the $\ell_1$-norm.
This technique is also used in \cite{huang2021weighted}.
Equation (\ref{lasso}) is a weighted double $\ell_1$-penalized problem, which is a kind of convex penalty optimization, and when $\lambda_1=\lambda_2$, it becomes a single penalized problem. 
In this paper, we use different $\lambda_1$ and $\lambda_2$, since the first-order conditions for estimating the regression coefficients are entirely different from those for estimating the dispersion parameters, and take $\lambda = \lambda_1 \vee \lambda_2$.

Since the weighted group lasso estimator $\widehat{\theta}_{n}$ has no closed-form solution, we need to use iterative methods like quasi-Newton or coordinate descent methods. We use BIC to choose the parameter $\lambda_1$ and $\lambda_2$. 
\begin{equation*}
    \operatorname{BIC}(\lambda_1, \lambda_2)=-2\ell(\widehat{\theta}_n)+ \dfrac{\log n}{n}k,
\end{equation*}
where $k$ is the number of nonzero estimated coefficients. To illustrate the algorithm explicitly, we rewrite $\gamma(\theta)$ as $\gamma(\theta^{(1)\top}x,\  \theta^{(2)\top}x)$, and define $\theta^{(3)} =\lambda_2/ \lambda_1\theta^{(2)}$, $\theta^{\dag}=(\theta^{(1)\top},\theta^{(3)\top})^{\top}$. Converting $\theta^{(2)}$ into $\theta^{(3)}$ turns the double $\ell_1$-penalized problem into a single penalized one, which is very to solve through some R packages. The algorithm is formally given in Algorithm 1.
\begin{algorithm}[t]
\caption{Double $\ell_1$-Penalized Optimization} 
\hspace*{0.02in} {\bf Input:} 
the set of tuning parameters  $\Lambda=\{(\lambda_{1,i}, \lambda_{2,i}) \}_{i=1}^m$\\
\hspace*{0.02in} {\bf Output:} 
the estimate $\widehat{\theta}_n$
\begin{algorithmic}
\For{$i=1,...,m$, } 
　　\State  let $x^*= \frac{\lambda_{1,i}}{\lambda_{2,i}}x$;
　　\State solve   $\widehat{\theta}^{\dag} = (\widehat{\theta}^{(1)\top}, \widehat{\theta}^{(3)\top})^{\top}=\argmin{\theta^{\dag} \in \Theta} \big( \pn \gamma(\theta^{(1)\top}x,\  \theta^{(3)\top}x^*)) + \lambda_{1,i} \| \theta^{\dag}\|_{1} \big)$;
　　\State obtain the estimate $\widehat{\theta}_{n,i}=(\widehat{\theta}^{(1)\top},  \frac{\lambda_{2,i}}{\lambda_{1,i}}\widehat{\theta}^{(3)\top})^{\top}$;
　　\State compute $ \operatorname{BIC}(\lambda_{1,i}, \lambda_{2,i})=-2 \ell (\widehat{\theta}_{n,i})+ \dfrac{\log n}{n}k_i$;
\EndFor
\State {\bf end for}
\State find $i_{opt}=\argmin{i=1,...,m} {\operatorname{BIC}(\lambda_{1,i}, \lambda_{2,i})}$;
\State \Return $\widehat{\theta}_{n,i_{opt}}$
\end{algorithmic}
\end{algorithm}

The proposed algorithm can do variable selection and dispersion estimation simultaneously. Similar studies on high-dimensional NBR models include \cite{wang2016penalized}, which assumed the dispersion parameter as a constant. However, their method requires an iterative algorithm to estimate the mean regression and dispersion alternatively and implement lasso in each iteration. If there are many iterations, such an algorithm is a waste of computing resources.

\section{Main results}
\subsection{Stochastic Lipschitz conditions}

We write the maximum of $Y_i$ from the sample of size $n$ as $M_{Y, n}$, then the sample space for $\{Y_i\}_{i = 1}^n$ is $\mathcal{Y} := \{y \in \mathbb{N}, \, y \leq M_{y, n}\}$. i.e. $M_{y, n} = \max_{i \in [n]} Y_i$. Note that $\lim_{n \rightarrow \infty} \pr ( M_{y, n} = \infty ) = 1$, what we need to tackle is actually a unbounded empirical process.
But for $ z := \left( \begin{matrix}
       x^{\top} & \\
         & x^{\top}
   \end{matrix} \right) \in \mathbb{R}^{2 \times 2p}$, we can assume the value space $\mathcal{S}$ for $s := z \theta$ is bounded and satisfies
\begin{equation*}
    \mathcal{S} := \big\{ s = (s_1, s_2)^{\top} \in \mathbb{R}^2, \ -\infty < m_{s, n} \leq s_j \leq |s_j| \leq M_{s, n} < \infty, \ j = 1, 2 \big\}.
\end{equation*}

As we can see, the most significant difference between this article and other conventional literature about lasso estimators is that we use $s = z \theta$ rather than $\theta$ as the explanatory variable to analyze the properties of the loss function $\gamma$. This is not a traditional way. At the first glimpse, the combination may complicate the analysis in the next step because the KKT condition requires the story about $\frac{\partial}{\partial \theta} \gamma$, which is critical for the traditional convex penalty problem. However, this article will try a different approach, the stochastic Lipschitz conditions introduced in the event $\mathcal{A}$ of Proposition 1 in \cite{zhang2022elastic}, to solve $\ell_1$-penalization problem. Define the \emph{local stochastic Lipschitz constant }by
\begin{equation*}
    \operatorname{Lip} (f; \theta^*) := \sup_{\theta \in \Theta / \{\theta^*\}} \left| \frac{\sqrt{n} \gn \big( f (\theta) - f (\theta^*)\big)}{\| \theta - \theta^*\|_1} \right|.
\end{equation*}
The most  apparent advantage of the stochastic Lipschitz conditions over KKT condition is that it can easily deal with the several parameters involved in different locations of the model that need to impose the same penalty on them, which is why we do not need to derive KKT condition in this paper.

To establish the stochastic Lipschitz conditions for this unbounded counting process, another assumption called \textit{strongly midpoint log-convex} for some positive $\gamma$ should be satisfied, which states for the joint density from the sample $\mathbb{Y} := (Y_1, \ldots, Y_n)^{\top} \in \mathbb{Z}^n$'s negative log-density of $n$ independent NB responses $\psi({y}):=-\log p_{\mathbb{Y}}({y})$ satisfies
\begin{equation*}
    \psi({x})+\psi({y})-\psi\left(\left\lceil\frac{1}{2} {x}+\frac{1}{2} {y}\right\rceil\right)-\psi\left(\left\lfloor\frac{1}{2} {x}+\frac{1}{2} {y}\right\rfloor\right) \geq \frac{\gamma}{4}\|{x}-{y}\|_{2}^{2}, \qquad \forall {x}, {y} \in \mathbb{Z}^{n}.
\end{equation*}
This assumption is a condition that ensures that the suprema of the multiplier empirical processes of $n$ independent responses have sub-exponential concentration phenomena, which can be alternatively be checked by the tail inequality for suprema of empirical processes corresponding to classes of unbounded functions (\cite{adamczak2008tail}). 

{
{
\begin{col}\label{thm1}
    Suppose $\max_{i \in [n], \, 1 \leq k \leq p} |X_{ik}| \leq M_x < \infty$, the parameter space $\Theta$ is convex and its diameter $D_{\Theta} < \infty$. If $\{Y_i\}_{i = 1}^n$ and $\{Z_i \theta\}_{i \in [n], \, \theta \in \Theta}$ are both in the value space $\mathcal{Y}$ and $\mathcal{S}$ defined as previous, then for any $\theta \in \Theta$,
    \begin{equation*}
         \begin{aligned}
             \operatorname{Lip}(\gamma; \theta^*) & = \sup_{\theta \in \Theta / \{\theta^*\}} \left| \frac{\sqrt{n} \gn \big( \gamma (\theta) - \gamma (\theta^*)\big)}{\| \theta - \theta^*\|_1} \right|  \\
             & \leq \sqrt{n} M_q:= \Big( A_1 \sqrt{\log (2p / q_2)} + A_2 \sqrt{\log p} + A_3 \sqrt{\log (p / q_3)}\Big) \sqrt{\max_{1 \leq k \leq p} \sum_{i = 1}^n X_{ik}^2 } \\
             & + B \sqrt{\log (2p / q_1)} \sqrt{\Big( \max_{1 \leq k \leq p} \sum_{i=1}^{n} X_{ik}^4 \Big)^{1/2}} \vee C \log (2p / q_1) + D \log (p / q_3),
         \end{aligned}
    \end{equation*}
    with probability at least $1 - q_0$, where $q_1, q_2, q_3 \in (0, 1)$ satisfy $q_1 + q_2 + q_3 = q_0$, and the constants are as follows:
    \begin{gather*}
        A_1 = \sqrt{2} F_1, \qquad A_2 = 32\sqrt{2} M_x F_2 D_{\Theta}, \qquad A_3 = \sqrt{2} \big( 2 (F_1 + M_{y, n}) \vee F_2 M_x D_{\Theta}\big), \\
        B = 6 \sqrt{2 \big( w^{(1)} \vee w^{(2)} \big) \bigg( \sum_{i = 1}^n a(\mu_i, k_i)^4\bigg)^{1/2}}, \qquad C = 12 M_x \big( w^{(1)} \vee w^{(2)} \big) \max_{1 \leq i \leq n} a(\mu_i, k_i), \\
        D = 8 \big( 2 (F_1 + M_{y, n}) \vee F_2 M_x D_{\Theta}\big) M_x, \qquad w^{(1)} = \frac{e^{M_{s, n}}}{e^{m_{s, n}} + e^{M_{s, n}}}, ~~ w^{(2)} = \frac{e + e^{M_{s, n} - m_{s, n}}}{1 + e^{m_{s, n} - M_{s, n}}} + \frac{1}{1 + e^{m_{s, n} - M_{s, n}}},
    \end{gather*}
    where $M_{y, n} = \max_{i \in [n]} Y_i$ is the suprema empirical process.
\end{col}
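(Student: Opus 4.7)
The plan is to exploit the fact that $\gamma(\theta; X, Y)$ depends on $\theta$ only through the two scalar linear combinations $s_{1} = X^{\top}\theta^{(1)}$ and $s_{2} = X^{\top}\theta^{(2)}$, both of which lie in the bounded interval $[m_{s,n}, M_{s,n}]$. I would write $\gamma(\theta) = g(s_{1}, s_{2}; X, Y)$ and expand $g$ to first order along the segment joining $\theta^{*}$ to $\theta$, producing the decomposition
\[
\gn\big(\gamma(\theta) - \gamma(\theta^{*})\big) = \gn\langle \nabla_{\theta}\gamma(\theta^{*}),\, \theta - \theta^{*}\rangle + \gn R(\theta,\theta^{*}).
\]
H\"older's inequality then reduces the linear piece to $\|\gn\nabla_{\theta}\gamma(\theta^{*})\|_{\infty}\|\theta-\theta^{*}\|_{1}$, so it enters $\operatorname{Lip}(\gamma;\theta^{*})$ through a coordinate-wise maximum of $2p$ centered averages. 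The remainder $\gn R$ will be controlled uniformly in $\theta\in\Theta$ by using that $\nabla^{2}g$ is bounded on $\mathcal{S}$ by the logistic-type weights $w^{(1)}, w^{(2)}$ plus an unbounded tail driven by $M_{y,n}$.

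For the linear piece, I would expand each block of $\nabla_{\theta}\gamma(\theta^{*})$ as $\phi_{i}^{(j)}(X_{i},Y_{i})\,X_{ik}$, where $\phi_{i}^{(1)}$ is a bounded logistic factor times the centered residual $Y_{i}-\mu_{i}^{*}$ and $\phi_{i}^{(2)}$ combines $Y_{i}-\mu_{i}^{*}$ with the digamma-like sum $\sum_{j=0}^{Y_{i}-1}(j+e^{s_{2}^{*}})^{-1}$. The strongly midpoint log-convex property of the joint NB density, combined with the tail inequality in \cite{adamczak2008tail}, yields sub-exponential concentration for $\sum_{i}X_{ik}(Y_{i}-\mu_{i}^{*})$; a coordinatewise Bernstein bound followed by a Cauchy--Schwarz step that converts $\sum_{i}X_{ik}^{2}a(\mu_{i},k_{i})^{2}$ into $(\sum_{i}X_{ik}^{4})^{1/2}(\sum_{i}a(\mu_{i},k_{i})^{4})^{1/2}$, together with a union bound over $2p$ coordinates, produce the $B\sqrt{\log(2p/q_{1})}\,(\max_{k}\sum_{i}X_{ik}^{4})^{1/4}\vee C\log(2p/q_{1})$ contribution. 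The bounded logistic factors separately yield the $A_{1}\sqrt{\log(2p/q_{2})}\sqrt{\max_{k}\sum_{i}X_{ik}^{2}}$ term via a sub-Gaussian maximal inequality.

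For the remainder, I would use the quadratic estimate $|R(\theta,\theta^{*})| \le \tfrac{1}{2}L(X,Y)\big(|X^{\top}(\theta^{(1)}-\theta^{(1)*})|^{2}+|X^{\top}(\theta^{(2)}-\theta^{(2)*})|^{2}\big)$ with the cross-partial norm $L(X,Y)$ bounded on $\mathcal{S}$ by $w^{(1)}\vee w^{(2)}$ together with the unbounded factor that enters through $M_{y,n}$ and $F_{1},F_{2}$, and then absorb one factor of $|X^{\top}(\theta^{(j)}-\theta^{(j)*})|$ by $M_{x}D_{\Theta}$ using the boundedness of $\Theta$. What remains is a uniform empirical process over $\theta\in\Theta$ that, by a standard peeling/chaining argument on the convex set $\Theta$, reduces to a coordinate-wise sub-exponential maximal inequality; the bounded component contributes $A_{2}\sqrt{\log p}$ and $A_{3}\sqrt{\log(p/q_{3})}\sqrt{\max_{k}\sum_{i}X_{ik}^{2}}$, while the $M_{y,n}$ piece contributes $D\log(p/q_{3})$. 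A final union bound with budget $q_{0}=q_{1}+q_{2}+q_{3}$ assembles the stated inequality.

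The hardest step will be the nonlinear, unbounded digamma-like sum $\sum_{j=0}^{Y_{i}-1}(j+e^{s_{2}})^{-1} \asymp \log(1+Y_{i}e^{-s_{2}})$, which does not fit the multiplier-Bernstein template directly and whose derivative in $s_{2}$ generates a second unbounded object. My plan is to dominate the sum from above by a logarithm of $Y_{i}$, note that $\log(1+Y_{i}e^{-s_{2}})$ inherits a finite $\psi_{1}$-Orlicz norm from the strongly midpoint log-convex assumption on the NB density, and then fold its contribution into the constants $A_{3}$ and $D$ via \cite{adamczak2008tail}'s tail inequality for suprema of unbounded empirical processes. This is precisely what forces the $M_{y,n}$ dependence displayed in $A_{3}$ and $D$, and is the reason the proof cannot proceed by a standard bounded-entropy argument.
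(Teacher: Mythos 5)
Your overall architecture (split the loss difference into a linear-in-$Y$ multiplier part plus a nonlinear part, reduce to coordinate-wise maxima of $2p$ centered sums via H\"older, and use sub-exponential concentration for weighted NB sums) matches the spirit of the paper, but the step that actually carries the proof is missing. The paper does not Taylor-expand with a quadratic remainder: it uses an exact coordinate-wise difference-quotient decomposition $\gamma(Z_i\theta,Y_i)-\gamma(Z_i\theta^*,Y_i)=\sum_{j=1}^2\big(\partial_j\gamma(c_i,Y_i)+\varphi_{ij}(t_i)\big)\,X_i^{\top}(\theta^{(j)}-\theta^{*(j)})$, which keeps every term multiplicatively linear in the fitted linear forms, so that after dividing by $\|\theta-\theta^*\|_1$ the whole supremum over $\theta$ collapses (by H\"older) to $\max_{1\le k\le p}$ of processes of the form $\sup_{\theta}\big|\sum_i\varphi_{i2}(t_i)X_{ik}\big|$, with $\varphi_{i2}$ bounded and $(F_2,\ell_\infty)$-Lipschitz in the two-dimensional index $t_i$. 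The uniformity in $\theta$ is then obtained by symmetrization, the vector-valued contraction principle (Lemma A.6), Massart's finite-class maximal inequality, and a Talagrand/Bousquet-type concentration bound (Lemma A.5), which is exactly what produces the $A_2\sqrt{\log p}$, $A_3\sqrt{\log(p/q_3)}$ and $D\log(p/q_3)$ terms. Your substitute for this step --- ``a standard peeling/chaining argument on the convex set $\Theta$'' --- would not deliver the stated bound: chaining over the $2p$-dimensional set $\Theta$ pays the metric entropy of $\Theta$, i.e.\ factors of order $\sqrt{p\log(D_\Theta/\epsilon)}$ rather than $\sqrt{\log p}$, and your quadratic remainder $R(\theta,\theta^*)$ divided by $\|\theta-\theta^*\|_1$ is not, as written, expressed as a bounded Lipschitz function of the linear indices alone, so the contraction route is not available to you without first redoing the multiplicative decomposition. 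This is a genuine gap, since it is precisely the high-dimensional $\log p$ dependence that the corollary is asserting.

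A second, lesser problem is that your ``hardest step'' is aimed at the wrong target. In this corollary the constants $A_1$, $A_3$, $D$ (through $F_1$, $F_3$, $M_{y,n}$) are allowed to be random: the unbounded digamma-type term $\nu(c_i,Y_i)$ is simply bounded by $F_1$ (which contains $M_{y,n}$) and handled by a plain Hoeffding bound, and the linear-in-$Y$ pieces need only the sub-exponential norm of each $Y_i$ (Lemma A.2), not the strongly midpoint log-convex assumption or Adamczak's inequality. Those tools enter only in Corollary 3.2, where $M_{y,n}$ itself is concentrated to de-randomize the bound. Folding the digamma-like sum into Orlicz-norm/Adamczak estimates at this stage is unnecessary and would in any case yield constants of a different form from the $M_{y,n}$-dependent ones in the statement.
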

}}

{
{

It is worthy to note that the $M_{y,n}$ in Corollary \ref{thm1} is a random process, hence the bound above is not deterministic. Fortunately, $M_{y, n}$ can use the strongly midpoint log-convex condition to be bounded, which we state in Lemma \ref{log_concave.lem}. Corollary \ref{thm1} combined Lemma \ref{log_concave.lem} will give the following corollary as a step more. 

\begin{col}\label{col2}
    Assume the conditions are the same as that in Corollary \ref{thm1}, then the stochastic Lipschitz constant has a nonrandom upper bound:
    \begin{equation*}
         \begin{aligned}
             \operatorname{Lip}(\gamma; \theta^*) & \leq \sqrt{n} M_q^{\prime}:= \Big( A_1 \sqrt{\log (2p / q_2)} + A_2 \sqrt{\log p} + 2 A_3^{\prime} \big( \log (2n / q_4) + \sqrt{\log (np / q_3) \big)}\Big) \sqrt{\max_{1 \leq k \leq p} \sum_{i = 1}^n X_{ik}^2 } \\
             & + B \sqrt{\log (2p / q_1)} \sqrt{\Big( \max_{1 \leq k \leq p} \sum_{i=1}^{n} X_{ik}^4 \Big)^{1/2}} \vee C \log (2p / q_1) + D \log (p / q_3),
         \end{aligned}
    \end{equation*}
    with probability at least $1 - q_0$, where $q_1, q_2, q_3, q_4\in (0, 1)$ satisfy $q_1 + q_2 + q_3 + q_4 = q_0$, and
    $$A_3^{\prime} = 2 \sqrt{2}\left(F_1 +  \left( 2 \gamma \max_{i \in [n]} \left[ a(\mu_i, k_i) - \frac{\mu_i}{\log 2} \right] \right) \vee F_2 M_x D_{\Theta}\right).$$
\end{col}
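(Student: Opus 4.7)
The plan is to combine Corollary \ref{thm1} with the deterministic tail bound on $M_{y,n}$ supplied by Lemma \ref{log_concave.lem}, absorbing the random sample maximum into the constants via a union bound.

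First I would invoke Corollary \ref{thm1}. Inspection of its constants shows that $A_1, A_2, B, C$ are already deterministic; the dependence on the sample maximum $M_{y,n}$ enters only through $A_3 = \sqrt{2}\bigl(2(F_1+M_{y,n}) \vee F_2 M_x D_\Theta\bigr)$ and through $D = 8 M_x\bigl(2(F_1+M_{y,n}) \vee F_2 M_x D_\Theta\bigr)$. So on the event of Corollary \ref{thm1}, the bound already has the form of the Corollary \ref{col2} statement modulo replacing $M_{y,n}$ by a deterministic proxy.

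Second, the strongly midpoint log-convex assumption with parameter $\gamma$ lets Lemma \ref{log_concave.lem} deliver a tail bound of sub-exponential/sub-Gaussian type for the maximum of negative binomial responses: on an event of probability at least $1-q_4$, one obtains $M_{y,n} \leq 2\gamma \max_{i \in [n]}\bigl[a(\mu_i, k_i) - \mu_i/\log 2\bigr]\bigl(\log(2n/q_4) + \sqrt{\log(np/q_3)}\bigr)$, or a closely related expression at the scale $\gamma \max_i [a(\mu_i,k_i) - \mu_i/\log 2]$. Substituting this into $A_3$ and $D$, and then factoring the scalar $\log(2n/q_4) + \sqrt{\log(np/q_3)}$ out in front, yields the new constant $A_3^\prime$ and collapses both the original $A_3 \sqrt{\log(p/q_3)}$ term and the $D\log(p/q_3)$ term (which contains the same linear function of $M_{y,n}$) into a single contribution $2 A_3^\prime \bigl(\log(2n/q_4)+\sqrt{\log(np/q_3)}\bigr)\sqrt{\max_k \sum_i X_{ik}^2}$. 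The slightly looser grouping $F_1 + X \vee Y$ rather than $(F_1+X)\vee Y$ adopted in $A_3^\prime$ is what lets a single constant majorize both substituted quantities simultaneously.

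Finally, a union bound over the Corollary \ref{thm1} event (failure probability $q_1+q_2+q_3$) and the Lemma \ref{log_concave.lem} event (failure probability $q_4$) closes the argument with total failure probability $q_0 = q_1+q_2+q_3+q_4$. I expect the main obstacle to be bookkeeping: verifying that the precise logarithmic form $\log(2n/q_4) + \sqrt{\log(np/q_3)}$ coming out of Lemma \ref{log_concave.lem} matches up with the $\sqrt{\log(p/q_3)}$ factor already present in Corollary \ref{thm1} after substitution, and confirming that the $2\sqrt{2}$ prefactor in $A_3^\prime$ is wide enough to swallow every constant picked up along the way without having to re-track them from the proof of Corollary \ref{thm1}.
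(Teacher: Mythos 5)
Your proposal matches the paper's intended argument: the paper offers no separate proof of this corollary, stating only that it follows by combining Corollary \ref{thm1} with the concentration for the sample maximum $M_{y,n}$ obtained from the strongly midpoint log-convex assumption via Lemma \ref{log_concave.lem} (made explicit in Lemma \ref{unbounded_y.lem}), substituting the resulting deterministic proxy into the $M_{y,n}$-dependent constants and closing with a union bound at level $q_0=q_1+q_2+q_3+q_4$ --- exactly your plan. The only caveat, shared by the paper's own statement, is that $F_1$ and $F_2$ (hence $A_1$, $A_2$, $D$) also depend on $M_{y,n}$, so calling them ``already deterministic'' is loose, but this does not distinguish your route from the paper's.
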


}}

Theorem \ref{thm1} gives us a different sight of the loss function far more than KKT conditions. However, the stochastic Lipschitz condition above does not compare the estimated and true values directly. We can resolve this issue by using an eigenvalue condition on the design matrix consisting of $X_i$. Since the design matrix $\X$ is fixed, the eigenvalue condition in the next section is reasonable. It is worthy to note that this inequality is an oracle since it involves an unknown empirical process on the right side.

\subsection{$\ell_{2}$-estimation error oracle inequalities RE conditions}

As we said previously, although we use stochastic Lipschitz conditions instead of KKT conditions, the restricted eigenvalue conditions (RE conditions) are still required. We denote by $\delta_{J}$ the vector in $\mathbb{R}^{p}$ with the same coordinates as $v$ on $J$ and zero coordinates on the complement $J^{c}$ of $J$, and $\spt(v) = \{j : v_j \neq 0\}$. 
We will assume that the minima in (\ref{lasso}) can always be obtained in the following setting, but it may not be unique.
In general, to bound $\widehat{\theta} - \theta^*$, some conditions on the design matrix $\X \in \mathbb{R}^{n \times p}$ are needed for getting abound in terms of the $\ell_2$ norm of $\theta - \theta^*$. 
Here we will utilize the restricted eigenvalue condition introduced in \cite{bickel2009simultaneous}, which says that for some $1 \leq s \leq p$ and $K > 0$,
\begin{equation}\label{RE.condition}
    \kappa (s, K) = \min \left\{ \frac{\|\X v\|_2}{\sqrt{n} \|v_J\|_2} : 1 \leq |J| \leq s, \, v \in \mathbb{R}^p / \{0\}, \, \| v_{J^c} \|_1 \leq K \|v_J\|_1 \right\} > 0.
\end{equation}
It should be noted that omitting the weight $\omega$ and the sparse restricted set $\| v_{J^c} \|_1 \leq K \|v_J\|_1$ leads to $v^{\top} \big[ \frac{1}{n} \X^{\top} \X \big] v / v^{\top} v \geq \kappa^2 (s, K)$. Thus it means that the smallest eigenvalue of the sample covariance matrix $\frac{1}{n} \X^{\top} \X$ is positive, which is impossible when $p > n$  since $\frac{1}{n} \X^{\top} \X$ is not full rank. To avoid
this problem, \cite{bickel2009simultaneous} consider the restricted eigenvalue condition under the
sparse restricted set $\| v_{J^c} \|_1 \leq K \|v_J\|_1$ as considerable relation in sparse high-dimensional estimation. The restricted eigenvalue is from the restricted strong convexity, which enforces a strong convexity condition for the negative log-likelihood function of linear models under certain sparse restrict set.

Due to the double penalty, besides the RE condition, we also require another condition similar to RE condition so-called $l$-restricted isometry constant defined in \cite{candes2007dantzig} as follows
\begin{equation*}
    \sigma_{\X, l}^2 = \max \left\{ {\|\X v\|_2^2} / {\|v\|_2^2} : v \in \mathbb{R}^p, \, 1 \leq \spt(v) \leq l\right\} \in (0, \infty),
\end{equation*}
which essentially requires the eigenvalue of sample covariance matrix under every vector with cardinality less than $l$ ($l$ should be no more than $n$) approximately behaves normally like the low dimensional case.

With the RE condition and $l$-restricted isometry constant, and the two theorems we established before, the lasso estimator in (\ref{lasso}) can guarantee a good consistent property.

\begin{lemma}[see Lemma 3.1 in \cite{candes2007dantzig}]\label{ex.lem5}
    Suppose $T_{0}$ is a set of cardinality $S$. For a vector $h \in \mathbb{R}^{p}$, we let $T_{1}$ be the $S$ largest positions of $h$ outside of $T_0$. Put $T_{01}=T_{0} \cup T_{1}$, then
    \begin{equation*}
        \| h\|_2^2 \leq \|h_{T_{01}}\|_2^2 + S^{-1} \| h_{T_0^c}\|_1^2.
    \end{equation*}
\end{lemma}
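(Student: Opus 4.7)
The plan is to split $\|h\|_2^2$ along the two disjoint index sets $T_{01}$ and $T_{01}^c$, which immediately reduces the claim to the single inequality $\|h_{T_{01}^c}\|_2^2 \le S^{-1} \|h_{T_0^c}\|_1^2$. Since $T_1 \subset T_0^c$ consists of the $S$ largest coordinates of $|h|$ on $T_0^c$, I would work entirely inside $T_0^c$, where both sides of the target inequality live.

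The core construction I would use is a decreasing-magnitude block decomposition of $T_0^c$. Inductively, let $T_2$ be the next $S$ largest positions of $|h|$ on $T_0^c\setminus T_1$, $T_3$ the next $S$ largest on $T_0^c\setminus(T_1\cup T_2)$, and so on (the final block may contain fewer than $S$ indices, but this causes no problem). These blocks are disjoint, with $T_0^c=\bigsqcup_{k\ge 1}T_k$ and therefore $T_{01}^c=\bigsqcup_{k\ge 2}T_k$. The key mechanism is a min-versus-average inequality: every coordinate of $h$ on $T_{k+1}$ has magnitude at most the smallest magnitude of $|h|$ on $T_k$, and this in turn is bounded by the average $\|h_{T_k}\|_1/S$ (since $|T_k|\le S$). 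Squaring entry-wise and summing over the at most $S$ coordinates of $T_{k+1}$ yields
\[
\|h_{T_{k+1}}\|_2^2 \;\le\; S\cdot\Bigl(\tfrac{\|h_{T_k}\|_1}{S}\Bigr)^{2} \;=\; \frac{\|h_{T_k}\|_1^2}{S}.
\]

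To close the argument I would sum this bound across $k\ge 1$, obtaining $\|h_{T_{01}^c}\|_2^2 \le S^{-1}\sum_{k\ge 1}\|h_{T_k}\|_1^2$, and then apply the elementary inequality $\sum_k a_k^2\le\bigl(\sum_k a_k\bigr)^2$ (valid since all $a_k\ge 0$) to get $\sum_{k\ge 1}\|h_{T_k}\|_1^2\le\bigl(\sum_{k\ge 1}\|h_{T_k}\|_1\bigr)^2=\|h_{T_0^c}\|_1^2$, the last step using disjointness of the $T_k$. Adding back $\|h_{T_{01}}\|_2^2$ from the initial split finishes the proof. There is no real obstacle; the only care-point is bookkeeping the possibly undersized final block, but since its $\ell_2^2$ contribution is only smaller than the generic per-block bound above, it is harmless.
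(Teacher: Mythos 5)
Your proof is correct. Note, however, that the paper does not prove this lemma at all: it simply imports it as Lemma 3.1 of \cite{candes2007dantzig}, so your argument is compared below with the original one there rather than with anything in this paper. The original Cand\`es--Tao proof is a one-liner based on order statistics: the $k$-th largest entry of $h_{T_0^c}$ satisfies $|h_{T_0^c}|_{(k)} \leq \|h_{T_0^c}\|_1 / k$, hence $\|h_{T_{01}^c}\|_2^2 \leq \|h_{T_0^c}\|_1^2 \sum_{k \geq S+1} k^{-2} \leq S^{-1} \|h_{T_0^c}\|_1^2$. You instead use the block decomposition of $T_0^c$ into successive groups of $S$ coordinates of decreasing magnitude and the min-versus-average bound $\|h_{T_{k+1}}\|_\infty \leq \|h_{T_k}\|_1 / S$, followed by $\sum_k a_k^2 \leq (\sum_k a_k)^2$; this is the standard device in compressed-sensing recovery proofs and is equally elementary, with the advantage of being self-contained and reusable for cone/RE-type arguments, while the order-statistic route is shorter. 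One small point of care in your write-up: the justification ``since $|T_k| \leq S$'' as stated gives the average $\|h_{T_k}\|_1/|T_k|$, which exceeds $\|h_{T_k}\|_1/S$ when $|T_k| < S$; the inequality you need actually uses $|T_k| = S$, which holds for every block except possibly the last, and for the last block $T_{k+1}$ is empty so the bound is vacuous --- you flag the undersized final block, so the argument stands, but the reason it is harmless is that it has no successor block, not that its own contribution is small.
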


\begin{theorem}\label{thm3}
    Suppose the condition is the same as that in Theorem \ref{thm2}. Furthermore, assume $p_1 = \spt \big( \theta^{(* 1)}\big) \vee \spt \big( \theta^{(* 2)}\big) \leq p / 2$, and there exists some $K > 1$, $\kappa : = \kappa (2p_1, K) > 0$. Let $\lambda = \frac{(K + 1)M_q}{n(K - 1)}$, then using this $\lambda$ in (\ref{lasso}), with probability at least $1 - q$,
    \begin{equation*}
        \| \widehat{\theta} - \theta^*\|_2^2 \leq \frac{8 p_1 M_q^{2\prime} K^2}{\kappa^4 n^2 C_{\gamma}^2 (K - 1)^2} \left[ 2 + K^2 + \frac{2(1 + 2 p_1 K^2)(n \kappa^2 + 2 \sigma_{\X, p_1}^2)}{n \kappa^2}\right],
    \end{equation*}
    where $M_q$, $C_{\gamma}$ are defined in Theorem \ref{thm1} and \ref{thm2} respectively.
\end{theorem}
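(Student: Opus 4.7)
The plan is to run a variant of the classical $\ell_1$-oracle argument, replacing the KKT/noise-margin step with the stochastic Lipschitz bound from Corollary \ref{col2}. Write $\widehat{h} := \widehat{\theta} - \theta^*$, split $\widehat{h} = (\widehat{h}^{(1)\top}, \widehat{h}^{(2)\top})^{\top}$, and let $J_j := \spt(\theta^{*(j)})$, so $|J_j| \leq p_1$ for $j=1,2$. Starting from the optimality of $\widehat{\theta}$ in \eqref{lasso},
\[
\pn\gamma(\widehat{\theta}) - \pn\gamma(\theta^*) \;\leq\; \lambda\big(\|\theta^*\|_{\omega,1} - \|\widehat{\theta}\|_{\omega,1}\big).
\]
Decomposing $\pn\gamma = P\gamma + n^{-1/2}\gn\gamma$, the empirical-process difference $(\pn-P)(\gamma(\widehat{\theta})-\gamma(\theta^*))$ is controlled linearly in $\|\widehat{h}\|_1$ by the bound of Corollary \ref{col2}, so that on an event of probability at least $1-q$ one obtains the master inequality
\[
P\big(\gamma(\widehat{\theta}) - \gamma(\theta^*)\big) + \lambda\|\widehat{\theta}\|_{\omega,1} \;\leq\; \frac{M_q'}{\sqrt{n}}\|\widehat{h}\|_1 + \lambda\|\theta^*\|_{\omega,1}.
\]

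Next I would invoke the restricted strong convexity provided by Theorem \ref{thm2} to lower-bound the population excess risk by $C_\gamma n^{-1}\big(\|\X\widehat{h}^{(1)}\|_2^2 + \|\X\widehat{h}^{(2)}\|_2^2\big)$, and apply the block triangle splits $\|\widehat{\theta}^{(j)}\|_1 - \|\theta^{*(j)}\|_1 \geq \|\widehat{h}^{(j)}_{J_j^c}\|_1 - \|\widehat{h}^{(j)}_{J_j}\|_1$. The choice $\lambda = (K+1)M_q/[n(K-1)]$ is calibrated so that the stochastic term is absorbed into a fraction $2/(K+1)$ of $\lambda\|\widehat{h}\|_{\omega,1}$; after rearrangement, together with the convention $\omega_1\vee\omega_2=1$, this yields the block cone condition
\[
\|\widehat{h}^{(j)}_{J_j^c}\|_1 \leq K\|\widehat{h}^{(j)}_{J_j}\|_1 \qquad (j=1,2).
\]
Because $|J_j|\leq p_1$, the restricted eigenvalue condition \eqref{RE.condition} with sparsity $2p_1$ applies block-wise to give $\|\X\widehat{h}^{(j)}\|_2^2 \geq n\kappa^2\|\widehat{h}^{(j)}_{J_j}\|_2^2$. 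Plugging this into the master inequality and applying Cauchy--Schwarz ($\|\widehat{h}^{(j)}_{J_j}\|_1 \leq \sqrt{p_1}\|\widehat{h}^{(j)}_{J_j}\|_2$) produces the on-support bound
\[
\|\widehat{h}_J\|_2^2 \;=\; \sum_{j=1}^2 \|\widehat{h}^{(j)}_{J_j}\|_2^2 \;\lesssim\; \frac{p_1 (M_q')^2 K^2}{\kappa^4 n^2 C_\gamma^2(K-1)^2},
\]
which already accounts for the leading prefactor $8p_1(M_q')^2 K^2/[\kappa^4 n^2 C_\gamma^2(K-1)^2]$ in the stated inequality.

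To extend the on-support bound to the full $\ell_2$-norm I would apply Lemma \ref{ex.lem5} block by block. For each $j$, let $T_1^{(j)}$ denote the $p_1$ coordinates of largest magnitude of $\widehat{h}^{(j)}$ outside $J_j$; the lemma yields
\[
\|\widehat{h}^{(j)}\|_2^2 \;\leq\; \|\widehat{h}^{(j)}_{J_j\cup T_1^{(j)}}\|_2^2 + p_1^{-1}\|\widehat{h}^{(j)}_{J_j^c}\|_1^2.
\]
The second term is dominated by the cone condition through $\|\widehat{h}^{(j)}_{J_j^c}\|_1^2 \leq K^2 p_1\|\widehat{h}^{(j)}_{J_j}\|_2^2$ and the preceding on-support bound. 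For the first, $|J_j\cup T_1^{(j)}|\leq 2p_1$, so the $l$-restricted isometry constant gives $\|\X\widehat{h}^{(j)}_{J_j\cup T_1^{(j)}}\|_2^2 \leq \sigma_{\X,p_1}^2\|\widehat{h}^{(j)}_{J_j\cup T_1^{(j)}}\|_2^2$; re-entering the master inequality with the $2p_1$-sparse vector $\widehat{h}^{(j)}_{J_j\cup T_1^{(j)}}$ in place of $\widehat{h}^{(j)}_{J_j}$ bounds $\|\widehat{h}^{(j)}_{J_j\cup T_1^{(j)}}\|_2^2$ by the on-support quantity times $(n\kappa^2 + 2\sigma_{\X,p_1}^2)/(n\kappa^2)$. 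Summing the two blocks and collecting constants reproduces the bracket $[\,2+K^2+2(1+2p_1K^2)(n\kappa^2+2\sigma_{\X,p_1}^2)/(n\kappa^2)\,]$.

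The main obstacle I anticipate is the bookkeeping around the double-block weighted-norm structure: the Lipschitz bound is naturally phrased in $\|\widehat{h}\|_1$, while the optimality inequality uses $\|\widehat{h}\|_{\omega,1}$ with two potentially distinct weights $\omega_1,\omega_2$. The convention $\lambda=\lambda_1\vee\lambda_2$ (so that $\omega_1\vee\omega_2=1$) has to be leveraged carefully to ensure that the block cone condition holds with the \emph{same} constant $K$ on both components, rather than being inflated to $K/(\omega_1\wedge\omega_2)$ on the weaker block. A secondary delicate point is re-entering the master inequality at the extension step with the $2p_1$-sparse vector $\widehat{h}^{(j)}_{J_j\cup T_1^{(j)}}$ in place of $\widehat{h}^{(j)}_{J_j}$, which is what forces the $\sigma_{\X,p_1}^2$ correction term to appear in the final bracket.
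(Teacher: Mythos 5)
Your overall skeleton (basic inequality from the definition of $\widehat{\theta}$, the stochastic Lipschitz bound in place of KKT, the lower bound from Theorem \ref{thm2}, the RE condition, and Lemma \ref{ex.lem5}) coincides with the paper's. The genuine gap is your claim that the rearrangement "yields the block cone condition $\|\widehat{h}^{(j)}_{J_j^c}\|_1 \leq K\|\widehat{h}^{(j)}_{J_j}\|_1$ for $j=1,2$." That does not follow. Because the heterogeneous NB loss couples $\theta^{(1)}$ and $\theta^{(2)}$, the optimality-plus-Lipschitz argument only gives the aggregate inequality $C_{\gamma}\sum_{j=1}^2\|\X\widehat{h}^{(j)}\|_2^2 \leq \frac{M_q}{a}\sum_{j=1}^2\big(K\|\widehat{h}^{(j)}_{J_j}\|_1-\|\widehat{h}^{(j)}_{J_j^c}\|_1\big)$ with $a=(K-1)/2$; nonnegativity of the left side only forces the \emph{sum} over $j$ to be nonnegative, so one block may violate the cone condition provided the other compensates. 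This is exactly what the paper's proof is organized around: it defines $I=\{j: K\|\widehat{\theta}^{(j)}_{A_j}-\theta^{*(j)}\|_1\geq\|\widehat{\theta}^{(j)}_{A_j^c}\|_1\}$, runs the standard RE argument only for $j\in I$, and for $j\notin I$ constructs the corrected vector $\widetilde{\theta}^{(j)}=\widehat{\theta}^{(j)}+\frac{\Delta_j}{p_1K}\sum_{k\in A_j}\operatorname{sgn}\big(\widehat{\theta}^{(j)}_k-\theta^{*(j)}_k\big)e_k$, which is pushed back into the cone; the cost of this correction is controlled by the restricted isometry constant via $\|\X(\widehat{\theta}^{(j)}-\widetilde{\theta}^{(j)})\|_2^2\leq\sigma_{\X,p_1}^2\|\widehat{\theta}^{(j)}-\widetilde{\theta}^{(j)}\|_2^2$, and the bound $\sum_{j\notin I}\Delta_j\leq K\sum_{j\in I}\|\widehat{h}^{(j)}_{A_j}\|_1$ ties the bad block back to the good one. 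That construction is precisely the source of the factor $(1+2p_1K^2)(n\kappa^2+2\sigma_{\X,p_1}^2)/(n\kappa^2)$ in the bracket, which the paper's remark notes is absent in the single-lasso problem.

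Your alternative route to the $\sigma_{\X,p_1}^2$ term, "re-entering the master inequality" with the $2p_1$-sparse restriction $\widehat{h}^{(j)}_{J_j\cup T_1^{(j)}}$, does not work as stated: the master inequality controls $\X\widehat{h}^{(j)}$, not $\X$ applied to a hard-thresholded copy of $\widehat{h}^{(j)}$, and if the cone condition really held on both blocks then Lemma \ref{ex.lem5} plus RE alone would finish the proof with no appearance of $\sigma_{\X,p_1}^2$ at all, so the stated bracket cannot be recovered along your path. Your secondary worry about the weights $\omega_1,\omega_2$ is legitimate (the paper's own step bounding $\|\widehat{h}^{(j)}\|_1+(1+1/a)\omega_j(\|\theta^{*(j)}\|_1-\|\widehat{\theta}^{(j)}\|_1)$ by $(K/a)\|\widehat{h}^{(j)}_{J_j}\|_1-(1/a)\|\widehat{h}^{(j)}_{J_j^c}\|_1$ really uses $\omega_j$ at its upper end, not merely $\omega_j\in[0,1]$), but it is peripheral; the missing idea is the treatment of the block on which the cone condition fails.
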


\begin{rem}
    Compared to the single lasso problem, in which we only have one unknown vectorized parameter, the oracle inequality in Theorem \ref{thm3} has an extra term $\frac{2(1 + 2 p_1 K^2)(n \kappa^2 + 2 \sigma_{\X, p_1}^2)}{n \kappa^2}$.
\end{rem}

\begin{rem}
    From Theorem \ref{thm3}, we know that the $\ell_2$ convergence rate is minimax optimal, as studied in \cite{zhang2022elastic}.
\end{rem}

From Theorem \ref{thm3}, we can get the following corollary regarding the consistency of $\widehat{\theta}$ immediately by the property of the supreme empirical process.

\section{Numerical Studies}
\subsection{Simulations}
In this section, we evaluate the finite sample performance of the proposed method.
The response is generated from the  negative binomial regression model (\ref{eq1}) with
\begin{equation*}
    \mu(x) = \exp\{ \theta^{(1) \top} x\}, \text{and~} k(x) = \exp\{ \theta^{(2) \top} x \},
\end{equation*}
where $\theta^{(1)}$ and $\theta^{(2)}$ are two $p$-dimensional parameters.
The explanatory variables are generated from the multivariate normal distributions with mean vector $\bf{0}$ and $Cov(x_i,x_j)=\rho^{|i-j|}$, where $\rho=0, 0.5$.
The following two examples show the performance of the proposed estimator for the low dimensional heterogeneous negative binomial regression and the variable selection in the high dimensional case, respectively. The R package {\bf lbfgs} is required to solve the optimization problem.

{\bf Example 1 (Low dimension).} We set $p=3$ and $n=100,200,400$. The true parameters are  $\theta^{(1)}=(1,2,-1)$ and $\theta^{(2)}=(-1,0.5,1)$, and their maximum likelihood estimators are denoted as  $\hat{\theta}^{(1)}$ and $\hat{\theta}^{(2)}$, respectively.
We compare the  estimator $\hat{\theta}^{(1)}$ with  $\hat{\theta}^{(1)*}$, which ignores the heterogeneity of the overdispersion and treats  $k(x)$ as a constant.
Table~\ref{t1} displays the average squared estimation  errors $\| \hat{\theta}-{\theta} \|_2^2$ based on 200 repetitions.

We can make the following observations from the table.
Firstly, the performances of the three estimators become better and better as $n$ increases.
Secondly, the estimator $\hat{\theta}^{(1)}$, which estimates the parameter in the mean function $\mu(x)$, performs better than  $\hat{\theta}^{(2)}$, which estimates the parameter in the overdispersion function $k(x)$.
Last but the most important, $\hat{\theta}^{(1)*}$ performs much worse than $\hat{\theta}^{(1)}$.
For example, the average squared estimation error of  $\hat{\theta}^{(1)*}$ is about 5 times of $\hat{\theta}^{(1)}$'s when $n=100$, and 10 times of $\hat{\theta}^{(1)}$'s when $n=400$.
The comparison between $\hat{\theta}^{(1)}$ and $\hat{\theta}^{(1)*}$  indicates the necessity of considering the heterogeneity of the overdispersion.

\begin{table}[htbp]
\setlength{\belowcaptionskip}{1em}
	\centering
	\caption{ The average squared estimation errors of the estimators.}
	\label{t1}
	\begin{tabular}{ccccccc}
\hline
n&\multicolumn{3}{c}{$\rho=0$}&\multicolumn{3}{c}{$\rho=0.5$}\\
\cmidrule(lr){2-4}\cmidrule(lr){5-7}
&$\hat{\theta}^{(1)*}$&$\hat{\theta}^{(1)}$&$\hat{\theta}^{(2)}$&$\hat{\theta}^{(1)*}$&$\hat{\theta}^{(1)}$&$\hat{\theta}^{(2)}$\\
\hline
100	&0.1597	&0.0335	&0.72414	&0.1809	&0.0397	&0.68904	\\
200	&0.0862	&0.01	&0.22149	&0.0837	&0.0169	&0.33048	\\
400	&0.05	&0.0047	&0.08847	&0.0619	&0.0067	&0.15066	\\
\hline
	\end{tabular}
\end{table}

{\bf Example 2 (High dimension).}
The sample sizes are chosen to be $n=100, 200, 400$, with dimension $p\in (25,50,150)$, $(50,100,250)$ and $(100,200,500)$, respectively. We set $\theta^{(1)}=(1,2,-1,0,\dots,0)$ and $\theta^{(2)}=(-1,0.5,1,0,\dots,0)$. The unknown tuning parameters $(\lambda_1, \lambda_2)$  for the penalty functions are chosen by BIC criterion in the simulation. Results over 200 repetitions are reported. For each case, table~\ref{t2} reports the number of repetitions that each important explanatory variable is selected in the final model and also the average number of unimportant explanatory variables being selected.

It shows that the variable selection procedure performs better and better as the sample size $n$ increases. When $n=400$, the important explanatory variables in $\mu (x)$ and $k(x)$ are correctly selected in almost every repetitions.
When the dimension $p$ increases, the procedure may select more unimportant explanatory variables, but the average numbers are less than $1.3$. The important variables in $k(x)$ are less likely to be selected than the important variables in $\mu(x)$ especially when the sample size is small, as well as the unimportant variables.

\begin{table}[htbp]
\setlength{\belowcaptionskip}{1em}
	\centering
	\caption{ The results of variable selection.}
	\label{t2}
	\begin{tabular}{cccccccccc}
\hline
& &\multicolumn{4}{c}{$\mu (x)$}&\multicolumn{4}{c}{$k(x)$}\\
\cmidrule(lr){3-6}\cmidrule(lr){7-10}
n&p& $\theta^{(1)}_1$& $\theta^{(1)}_2$& $\theta^{(1)}_3$&Other $\theta^{(1)}$s
& $\theta^{(2)}_1$& $\theta^{(2)}_2$& $\theta^{(2)}_3$&Other $\theta^{(2)}$s\\
\hline
\multicolumn{10}{l}{$\rho=0$}\\
\hline
100&	25&	192&	200&	190&	0.37&	180&	184&	180&	0.32\\
&	50&	196&	200&	193&	0.52&	182&	180&	188&	0.41\\
&	150&	194&	194&	192&	1.02&	188&	182&	186&	0.41\\
\hline
200&	50&	200&	200&	200&	0.59&	200&	190&	198&	0.53\\
&	100&	200&	200&	200&	0.91&	196&	186&	196&	0.69\\
&	250&	199&	199&	198&	1.18&	198&	198&	198&	0.69\\
\hline
400&	100&	200&	200&	200&	0.4&	200&	198&	200&	0.55\\
&	200&	200&	200&	200&	0.6&	200&	200&	200&	0.51\\
&	500&	200&	200&	200&	1.21&	200&	200&	200&	0.61\\
\hline
\multicolumn{10}{l}{$\rho=0.5$}\\
\hline
100&	25&	194&	198&	194&	0.41&	179&	184&	180&	0.35\\
&	50&	196&	196&	190&	0.63&	178&	182&	180&	0.42\\
&	150&	194&	196&	192&	1.01&	180&	184&	182&	0.43\\
\hline
200&	50&	200&	200&	198&	0.38&	196&	183&	190&	0.32\\
&	100&	199&	200&	198&	0.53&	194&	186&	194&	0.44\\
&	250&	196&	198&	196&	1.1&	196&	196&	194&	0.55\\
\hline
400&	100&	200&	200&	200&	0.28&	200&	199&	194&	0.34\\
&	200&	200&	200&	198&	0.47&	200&	198&	196&	0.36\\
&	500&	200&	200&	198&	1.07&	200&	198&	196&	0.56\\
\hline
	\end{tabular}
\end{table}

\subsection{A real data example}
In this section, we apply the proposed method to the dataset of German health care demand. The data was employed in \cite{riphahn2003incentive} and could be downloaded on \url{http://qed.econ.queensu.ca/jae/2003-v18.4/riphahn-wambach-million/}.
The data contains 27,326 observations on 25 variables, including two dependent variables Docvis (number of doctor visits in last three months) and Hospvis (number of hospital visits in last calendar year). For conciseness, we focus on Docvis in this study.
We build the HNBR model based on the proposed variable selection procedure and make the standard NBR model a comparison. 
Define the fitting errors (FE) as  $n^{-1}\sum_{i=1}^n(y_i-\widehat{y}_i)$, where $y_i$ denotes the raw data of Hospvis, $\hat{y}_i$ is the predicted value, and $n$ is the sample size.
As the data is observed during 1984-1988, 1991, and 1994, we make the analysis each observed year.
Table~\ref{t-real} displays the variable selection results and fitting errors.

We have the following findings from the table. First, the important variables in the NBR are the same as HNBR models in each year, and the estimates are close. Second, the selected variables in $\mu (x)$ are almost the same every year, namely Age, Hsat (health satisfaction), Handper (degree of handicap), and Educ (years of schooling).
Moreover, some of these variables still play an essential role in $k(x)$, and $k(x)$ contains no variables other than these. Also we can see that the fitting errors of HNBR is less than that of NBR. 
All of these illustrate the advantage of our method. 
\begin{table}[htbp]\footnotesize
\setlength{\belowcaptionskip}{1em}
	\centering
	\caption{ The variable selection results and the fitting errors (FE) of NBR and HNBR models. The variable Others $=\{$ Married, Haupts, Reals, Fachhs, Abitur, Univ, Working, Bluec, Whitec, Self, Beamt, Public, Addon $\}$. Since these variables are not selected in any year, we put them in ``Others'' for brevity.}
	\label{t-real}
	\begin{tabular}{lcccccccccccc}
\hline
Variables&\multicolumn{3}{c}{$1984$}&\multicolumn{3}{c}{1985}&
\multicolumn{3}{c}{$1986$}&\multicolumn{3}{c}{1987}\\
\cmidrule(lr){2-4}\cmidrule(lr){5-7}
\cmidrule(lr){8-10}\cmidrule(lr){11-13}
&NBR&\multicolumn{2}{c}{HNBR}&NBR&\multicolumn{2}{c}{HNBR}
&NBR&\multicolumn{2}{c}{HNBR}&NBR&\multicolumn{2}{c}{HNBR}\\
\cmidrule(lr){3-4}\cmidrule(lr){6-7}
\cmidrule(lr){9-10}\cmidrule(lr){12-13}
&	&$\mu(x)$	&$k(x)$	&  &$\mu(x)$	& $k(x)$	&  & $\mu(x)$	& $k(x)$ &	&$\mu(x)$	&$k(x)$	\\
\hline
Female	&0	&0	&0	&0	&0	&0	&0	&0	&0	&0	&0	&0	\\
Age	&-0.013	&-0.013	&-0.012	&-0.009	&-0.01	&-0.007	&-0.006	&-0.006	&-0.013	&-0.002	&-0.001	&-0.018	\\
Hsat	&-0.205	&-0.2	&-0.025	&-0.244	&-0.237	&0	&-0.188	&-0.195	&-0.045	&-0.158	&-0.153	&-0.043	\\
Handdum	&0	&0	&0	&0	&0	&0	&0	&0	&0	&0	&0	&0	\\
Handper	&0.005	&0.005	&0.004	&0.007	&0.006	&0.007	&0.007	&0.007	&0	&0.007	&0.007	&0.01	\\
Hhninc	&0	&0	&0	&0	&0	&0	&0	&0	&0	&0	&0	&0	\\
Hhkids	&0	&0	&0	&0	&0	&0	&0	&0	&0	&0	&0	&0	\\
Educ	&0	&0	&-0.027	&0	&0	&-0.064	&-0.035	&-0.038	&0	&-0.095	&-0.106	&-0.003	\\
Others	&0	&0	&0	&0	&0	&0	&0	&0	&0	&0	&0	&0	\\
\hline
FE	&0.798	&\multicolumn{2}{c}{0.602}	&2.203	&\multicolumn{2}{c}{1.874}		&0.735	&\multicolumn{2}{c}{0.581}	&1.314	&\multicolumn{2}{c}{1.027}	\\
\hline
\\
\cmidrule(lr){1-10}
Variables&\multicolumn{3}{c}{$1988$}&\multicolumn{3}{c}{1991}&
\multicolumn{3}{c}{$1994$}&\multicolumn{3}{c}{}\\
\cmidrule(lr){2-4}\cmidrule(lr){5-7}
\cmidrule(lr){8-10}
&NBR&\multicolumn{2}{c}{HNBR}&NBR&\multicolumn{2}{c}{HNBR}
&NBR&\multicolumn{2}{c}{HNBR}&\\
\cmidrule(lr){3-4}\cmidrule(lr){6-7}
\cmidrule(lr){9-10}
&	&$\mu(x)$	&$k(x)$	&  &$\mu(x)$	& $k(x)$	&  & $\mu(x)$	& $k(x)$ &	&	&\\
\cmidrule(lr){1-10}
Female	&0	&0	&0	&0	&0	&0	&0	&0	&0	&	&	&	\\
Age	&-0.015	&-0.014	&-0.012	&-0.022	&-0.019	&-0.003	&-0.005	&-0.004	&-0.011	&	&	&	\\
Hsat	&-0.191	&-0.187	&-0.015	&-0.112	&-0.132	&-0.049	&-0.226	&-0.224	&-0.06	&	&	&	\\
Handdum	&0	&0	&0	&0	&0	&0	&0	&0	&0	&	&	&	\\
Handper	&0.011	&0.009	&0.006	&0.014	&0.013	&0	&0.007	&0.008	&0.004	&	&	&	\\
Hhninc	&0	&0	&0	&0	&0	&0	&0	&0	&0	&	&	&	\\
Hhkids	&0	&0	&0	&0	&0	&0	&0	&0	&0	&	&	&	\\
Educ	&-0.016	&-0.023	&-0.002	&-0.074	&-0.068	&0	&-0.064	&-0.069	&0	&	&	&	\\
Others	&0	&0	&0	&0	&0	&0	&0	&0	&0	&	&	&	\\
\cmidrule(lr){1-10}
FE	&1.144	&\multicolumn{2}{c}{0.912}	&1.007	&\multicolumn{2}{c}{0.787}	&0.713	&\multicolumn{2}{c}{0.58}	&	&	&	\\
\cmidrule(lr){1-10}
	\end{tabular}
\end{table}

\section{Conclusion and future study} 
We study the high-dimensional heterogeneous overdispersed count data via negative binomial regression models, and propose a double $\ell_1$-regularized method for simultaneous variable selection and dispersion estimation.
Under the restricted eigenvalue conditions, we prove the oracle inequalities with lasso estimators of two partial regression coefficients for the first time, using concentration inequalities of empirical processes. 
Furthermore, we derive the consistency and convergence rate for the estimators, which are the theoretical guarantees for further statistical inference. 
Simulation studies and a real example from the German  health  care  demand data indicate that the proposed method works satisfactorily.

In this work, we assume that the responses are independent. In the time-series data, however, the NB responses are temporal
dependent \citep{yang2021law}. Thus,  weak dependence conditions, including $\rho$-mixing, $m$-dependent types, could be considered in the future. The consequences of the main result would motivate further study for statistical inference, such as testing heterogeneous
$$
H_{0}: \theta^{(2)} =0  \text { vs. } H_{1}: \theta^{(2)} \ne 0 .
$$
The issues concerning the hypothesis testing are via the debiased Lasso estimator; see \cite{shi2019linear} and references therein. Another possible study is the false discovery rate (FDR) control, which aims to identify some small number of statistically significantly nonzero results after getting the sparse penalized estimation of HNBR; see \cite{xie2021aggregating,cui2021directional}.

 \appendix
 \section{Proofs}
  \renewcommand{\appendixname}{Appendix~\Alph{section}}
The first step is giving the property of the loss function. From mathematical analysis, we prefer bounded things to unlimited things.
Denote $\partial_j$ is the first partial differentiation with respect to $s_j$. The bounded aspect for $y$ and $s$ gives a nice property for the loss function $\gamma (s , y) = \gamma (z \theta, y)$.

\begin{lemma}\label{lem1}
    We have
    \begin{equation*}
        \partial_1 \gamma(s, y) = - \frac{y e^{s_2}}{e^{s_1} + e^{s_2}} + \frac{e^{s_1 + s_2}}{e^{s_1} + e^{s_2}}, \qquad \partial_2 \gamma(s, y) = \nu(s, y) + \frac{y e^{s_2}}{e^{s_1} + e^{s_2}}
    \end{equation*}
    where $\nu (s, y) = e^{s_2} \big( \psi(e^{s_2}) - \psi(y + e^{s_2}) \big) + e^{s_2} \log \big( 1 + e^{s_1 - s_2}\big) - \frac{e^{s_1 + s_2}}{e^{s_1} + e^{s_2}}$ satisfying
    \begin{equation*}
        \sup_{s \in \mathcal{S}, \, y \in \mathcal{Y}} | \nu (s, y)| \leq F_1, \qquad \sup_{s \neq t \in \mathcal{S}, \, y \in \mathcal{Y}} \frac{|\nu (s, y) - \nu (t, y)|}{\| s - t\|_{\infty}} \leq F_2,
    \end{equation*}
    with $F_1 =  M_{y, n} \big( 1 + e^{- m_{s, n}}\big) + e^{M_{s, n}} + \frac{e^{2 M_{s, n}}}{2 e^{m_{s, n}}}$, and
    \begin{equation*}
        \begin{aligned}
            F_2 & = 2 \left[ \left|e^{M_{s, n}} \left( 1 + \log(M_{y, n} + e^{M_{s, n}}) - \frac{1}{2(M_{y, n} + e^{M_{s, n}})}\right) \right| \vee \left| 1 - m_{s, n} e^{m_{s, n}}\right| \right]  \\
            & \qquad \qquad \qquad + \left[ \frac{e^{2 M_{s, n}}}{e^{m_{s, n}}} + \frac{2 e^{2 M_{s, n}}}{e^{m_{s, n}} + e^{M_{s, n}}}\right] + \frac{3}{2} e^{M_{s, n}}.
        \end{aligned}
    \end{equation*}
\end{lemma}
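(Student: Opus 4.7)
\textbf{Proof plan for Lemma \ref{lem1}.}

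The plan is to split the argument into two stages: an explicit derivative computation, and the two supremum bounds on the residual function $\nu$. For the first stage I would rewrite the loss using the log-likelihood already displayed in the paper, namely
\[
\gamma(s,y) \;=\; \log\Gamma(e^{s_2}) - \log\Gamma(y+e^{s_2}) - y(s_1-s_2) + (y + e^{s_2})\log\bigl(1+e^{s_1-s_2}\bigr),
\]
and differentiate term by term, treating $k = e^{s_2}$ and exploiting chain rule. Only the last term contributes to $\partial_1\gamma$ nontrivially; combining its derivative with the $-y$ from $\partial_1[-y(s_1-s_2)]$ and simplifying produces the sigmoid expression $-ye^{s_2}/(e^{s_1}+e^{s_2}) + e^{s_1+s_2}/(e^{s_1}+e^{s_2})$ claimed for $\partial_1\gamma$. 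For $\partial_2\gamma$, the chain rule through $k = e^{s_2}$ pulls out digamma factors $e^{s_2}\psi(e^{s_2})$ and $-e^{s_2}\psi(y+e^{s_2})$ from the log-gamma terms, while the product rule on $(y + e^{s_2})\log(1+e^{s_1-s_2})$ generates both an $e^{s_2}\log(1+e^{s_1-s_2})$ and a sigmoid contribution that, after using $\partial_{s_2}\log(1+e^{s_1-s_2}) = -e^{s_1}/(e^{s_1}+e^{s_2})$, separates cleanly into the residual $ye^{s_2}/(e^{s_1}+e^{s_2})$ and the definition of $\nu(s,y)$.

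For the uniform bound $|\nu(s,y)| \leq F_1$ I would apply the triangle inequality to the three summands of $\nu$ in turn. The digamma difference is negative, and the identity $\psi(y+k) - \psi(k) = \sum_{j=0}^{y-1}(j+k)^{-1}$ gives $|e^{s_2}(\psi(e^{s_2}) - \psi(y+e^{s_2}))| = e^{s_2}\sum_{j=0}^{y-1}(j+e^{s_2})^{-1}$; splitting off the $j=0$ term (which equals $1$) and using $(j+e^{s_2})^{-1} \leq e^{-m_{s,n}}$ on the remaining $y-1$ summands produces a bound of order $M_{y,n}(1+e^{-m_{s,n}})$. The middle summand is bounded by $e^{s_2}\cdot e^{s_1 - s_2} = e^{s_1} \leq e^{M_{s,n}}$ via $\log(1+x) \leq x$, and the sigmoid summand is bounded via AM--GM on the denominator, $e^{s_1}+e^{s_2} \geq 2e^{(s_1+s_2)/2}$, giving $e^{2M_{s,n}}/(2e^{m_{s,n}})$. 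Assembling the three pieces reproduces $F_1$.

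For the Lipschitz estimate I would use the mean value theorem along a line segment in $\mathcal{S}$, so that it suffices to bound $\|\nabla_s \nu(s,y)\|_1$ uniformly in $s\in\mathcal{S}$ and $y\in\mathcal{Y}$. The $s_1$-partial of $\nu$ is routine: it arises only from $e^{s_2}\log(1+e^{s_1-s_2})$ and the sigmoid term, both sigmoidal in $s_1-s_2$, and careful accounting delivers the bracketed expression $[e^{2M_{s,n}}/e^{m_{s,n}} + 2e^{2M_{s,n}}/(e^{m_{s,n}}+e^{M_{s,n}})]$ appearing in $F_2$. The $s_2$-partial is the delicate piece: differentiating $e^{s_2}\psi(e^{s_2}) - e^{s_2}\psi(y+e^{s_2})$ produces a trigamma contribution $e^{2s_2}(\psi'(e^{s_2}) - \psi'(y+e^{s_2}))$ together with the original digamma expression. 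Using the standard asymptotics $\psi(u) = \log u - 1/(2u) + O(u^{-2})$ and $\psi'(u) = 1/u + O(u^{-2})$ converts the digamma-trigamma bundle into the logarithmic term $e^{M_{s,n}}(1 + \log(M_{y,n} + e^{M_{s,n}}) - \tfrac{1}{2(M_{y,n}+e^{M_{s,n}})})$ visible in $F_2$, while in the opposite (small $e^{s_2}$) regime the identity $u\psi(u) \to 0$ as $u\downarrow 0$ and a direct computation give the alternative $|1 - m_{s,n} e^{m_{s,n}}|$ bound; taking the maximum of the two regimes yields the "$\vee$" inside $F_2$. The remaining sigmoid-type derivatives contribute at most $\tfrac{3}{2}e^{M_{s,n}}$, matching the last summand of $F_2$.

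The genuinely hard step will be the $s_2$-derivative of the digamma bundle, because one must keep the estimate uniform in $y$ while simultaneously handling the two asymptotic regimes $e^{s_2}\ll 1$ and $e^{s_2}\gg 1$; this is exactly why $F_2$ is written as a maximum of two rather different expressions. Everything else — the partial derivative calculation, the sigmoid bounds, the AM--GM step — is routine algebra, and the triangle inequality reassembles the pieces into the stated constants.
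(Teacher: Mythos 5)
Your plan follows essentially the same route as the paper's proof: compute the two partials directly from the negative log-likelihood, bound $|\nu|$ by the triangle inequality over its three summands, and obtain $F_2$ by bounding the derivative of each summand (mean value theorem) and reassembling. For $F_1$ your finite-sum identity $\psi(y+k)-\psi(k)=\sum_{j=0}^{y-1}(j+k)^{-1}$, together with $e^{s_2}(j+e^{s_2})^{-1}\le 1$, the bound $\log(1+x)\le x$, and the AM--GM step, lands comfortably inside the stated $F_1$ (the paper instead uses the mean value theorem with $0<\psi'(x)\le x^{-1}+x^{-2}$, which yields exactly the term $M_{y,n}(1+e^{-m_{s,n}})$); this part of your plan is sound.

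The one place where the plan as written would not go through is the digamma--trigamma bundle in the $F_2$ step. You invoke the asymptotic expansions $\psi(u)=\log u-\tfrac{1}{2u}+O(u^{-2})$ and $\psi'(u)=u^{-1}+O(u^{-2})$; since $F_2$ is an exact non-asymptotic constant that must hold for all $s\in\mathcal S$ and $y\in\mathcal Y$ (including $y=0$ and small $e^{s_2}$), unquantified $O$-terms cannot deliver it. The paper instead uses the global inequalities $\tfrac{1}{2x}<\log x-\psi(x)<\tfrac1x$ and $0<\psi'(x)\le \tfrac1x+\tfrac1{x^2}$, valid for every $x>0$ (Alzer's inequality and Binet's formula), to sandwich the derivatives of $f_1(x)=e^x\psi(e^x)$ and $f_2(x)=e^x\psi(y+e^x)$, e.g.\ $xe^x-1\le f_2'(x)\le e^x\bigl(1+\log(y+e^x)-\tfrac{1}{2(y+e^x)}\bigr)+1$; the ``$\vee$'' inside $F_2$ is then simply the maximum of the absolute values of these lower and upper derivative bounds, not a split into the regimes $e^{s_2}\ll1$ and $e^{s_2}\gg1$. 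Relatedly, your small-argument fallback rests on the claim that $u\psi(u)\to0$ as $u\downarrow0$, which is false (in fact $u\psi(u)\to-1$). Replacing the asymptotics by the two-sided global inequalities above, and taking absolute values of the resulting bounds, repairs this step and recovers the stated $F_2$ along the same lines as the rest of your argument.
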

\begin{proof}
    We will use the properties of the psi function, the logarithmic derivative of the gamma function, to prove this lemma. Write $\psi (x) = \Gamma^{\prime}(x) / \Gamma(x)$. For any $s \in \mathcal{S}, \ y \in \mathcal{Y}$, using the Binet ’s formula (see p.18 of \cite{bateman1953higher})
    \begin{equation*}
        \psi(x) = \log x - \int_0^{\infty} \varphi(t) e^{- t x} \, dt,
    \end{equation*}
    where $\varphi(t)=1 /(1-e^{-t})-1 / t$ is  strictly increasing on $(0, \infty)$, it gives
    \begin{equation*}
        0 < \psi^{\prime} (x) = \frac{1}{x} + \int_0^{\infty} t \varphi(t) e^{- t x} \, dt  \leq \frac{1}{x} + \int_0^{\infty} t e^{- t x} \, dt = \frac{1}{x} + \frac{1}{x^2}.
    \end{equation*}
    and $y \geq 0$, we have
    \begin{equation*}
        \begin{aligned}
            |\nu (s, y)| & = \left| e^{s_2} \big( \psi (e^{s_2})  - \psi (y + e^{s_2}) \big) +  e^{s_2} \log \big( 1 + e^{s_1 - s_2}\big) - \frac{e^{s_1 + s_2}}{e^{s_1} + e^{s_2}} \right| \\
            & \leq e^{s_2} y \left( \frac{1}{e^{s_2}} + \frac{1}{e^{s_2}}\right) + e^{s_2} e^{s_1 - s_2} + \frac{e^{s_1 + s_2}}{e^{s_1} + e^{s_2}} \leq M_{y, n} \big( 1 + e^{- m_{s, n}}\big) + e^{M_{s, n}} + \frac{e^{2 M_{s, n}}}{2 e^{m_{s, n}}}.
        \end{aligned}
    \end{equation*}
    Then the first inequality in the lemma has been verified.
    On the other hand, by using the fact that (see (2.2) in \cite{alzer1997some})
    \begin{equation*}
        \frac{1}{2 x}<\log (x)-\psi(x)<\frac{1}{x}, \qquad x > 0,
    \end{equation*}
    for the function $f_1(x) = e^x \psi (e^x)$ and $f_2(x) = e^x \psi(y + e^x)$,
    \begin{gather*}
        f_1^{\prime}(x) = e^x \psi (e^x) + e^{2x} \psi^{\prime} (e^x) \leq e^x \left( x - \frac{1}{2e^x}\right) + e^{2x} \left( \frac{1}{e^x} + \frac{1}{e^{2x}}\right) = (x + 1) e^x + \frac{1}{2},\\
         f_1^{\prime}(x) = e^x \psi (e^x) - e^{2x} \psi^{\prime} (e^x) \geq e^x \psi (e^x ) \geq x e^x - 1, \\
         f_2^{\prime} (x) = e^x \psi(y + e^x) + e^{2x} \psi^{\prime}(y + e^x) \leq e^x \left( 1 + \log (y + e^x) - \frac{1}{2(y + e^x)}\right) + 1, \\
         f_2^{\prime} (x) \geq e^x \left( \log (y + e^x) - \frac{1}{y + e^x}\right)  \geq x e^x -1,
    \end{gather*}
    and for any $s \neq t \in \mathcal{S}$, $y \in \mathcal{Y}$, we conclude that
    \begin{equation*}
        |\psi(e^{s_2}) e^{s_2} - \psi(e^{t_2}) e^{t_2}| = |f_1(s_2) - f_1(t_2)| \leq \left( \left| (M_{s, n} + 1) e^{M_{s, n}}  + {1} / {2}\right| \vee \left| 1 - m_{s, n} e^{m_{s, n}}\right|  \right) \| s - t\|_{\infty},
    \end{equation*}
    and
    \begin{equation*}
        \begin{aligned}
            & |\psi(y + e^{s_2}) e^{s_2} - \psi(y + e^{t_2}) e^{t_2}| = |f_2(s_2) - f_2(t_2)| \\
            & \leq \left[ \left|e^{M_{s, n}} \left( 1 + \log(M_{y, n} + e^{M_{s, n}}) - \frac{1}{2(M_{y, n} + e^{M_{s, n}})}\right) \right| \vee \left| 1 - m_{s, n} e^{m_{s, n}}\right| \right] \| s - t\|_{\infty}.
        \end{aligned}
    \end{equation*}
    Besides, using the median value theorem again, we also have
    \begin{equation*}
        \begin{aligned}
            & \left| e^{s_2} \log(1 + e^{s_1 - s_2}) - e^{t_2} \log(1 + e^{t_1 - t_2}) \right| \\
            \leq & \log (1 + e^{s_1 - s_2}) |e^{s_2} - e^{t_2}| + e^{t_2} \left| \log(1 + e^{s_1 - s_2}) - \log(1 + e^{t_1 - t_2}) \right| \\
            \leq & e^{2M_{s, n} - m_{s, n}} |s_2 - t_2| + e^{M_{s, n}}\frac{1}{1 + e^{- (M_{s, n} - m_{s, n})}} |(s_1 - s_2) - (t_1 - t_2)| \\
            \leq & \left[ \frac{e^{2 M_{s, n}}}{e^{m_{s, n}}} + \frac{2 e^{2 M_{s, n}}}{e^{m_{s, n}} + e^{M_{s, n}}}\right] \| s - t\|_{\infty},
        \end{aligned}
    \end{equation*}
    and
    \begin{equation*}
        \begin{aligned}
            \left| \frac{e^{s_1 + s_2}}{e^{s_1} + e^{s_2}} - \frac{e^{t_1 + t_2}}{e^{t_1} + e^{t_2}} \right| & \leq e^{s_1} \left| \frac{1}{1 + e^{s_1 - s_2}} - \frac{1}{1 + e^{t_1 - t_2}}\right| + \frac{1}{1 + e^{t_1 - t_2}} |e^{s_1} - e^{t_1} | \\
            & \leq e^{M_{s, n}} \frac{1}{4} |(s_1 - s_2) - (t_1 - t_2)| + 1 \times e^{M_{s, n}} |s_1 - t_1|  \leq \frac{3}{2} e^{M_{s, n}} \| s - t \|_{\infty},
        \end{aligned}
    \end{equation*}
    where the fact used is that $f_3(x) = 1 / (1 + e^x)$ satisfies $| f_3^{\prime} (x)| = 1 / (e^x + e^{-x} + 2) \leq 1 / 4$. Since
    \begin{equation*}
        \begin{aligned}
            |\partial_2 \gamma(s, y) - \partial_2 \gamma(t, y)| & \leq \left| \psi(e^{s_2}) e^{s_2} - \psi (e^{t_2}) e^{t_2}\right| + \left| \psi(y + e^{s_2}) e^{s_2} - \psi (y + e^{t_2}) e^{t_2}\right| \\
            & \qquad + \left| e^{s_2} \log(1 + e^{s_1 - s_2}) - e^{t_2} \log(1 + e^{t_1 - t_2}) \right| + \left| \frac{e^{s_1 + s_2}}{e^{s_1} + e^{s_2}} - \frac{e^{t_1 + t_2}}{e^{t_1} + e^{t_2}} \right|,
        \end{aligned}
    \end{equation*}
    we can conclude the second inequality in the lemma.
\end{proof}

The Lemma separates the partial derivative of $\gamma$ into two parts: the first part is the linear about the response variable $y$ (say $-y e^{s_2} / (e^{s_1} + e^{s_2})$, $e^{s_1 + s_2} / (e^{s_1} + e^{s_2})$, and $y e^{s_2} / (e^{s_1} + e^{s_2})$), the second part is other complicated functions (not linear function) about $y$. The first part is relatively easy to analyze since the following concentration inequality gives a measure of dispersion about the weighted summation of negative binomial variables. This concentration inequality is a special case for the weighted summation of a series of random variables, which can be proved by sub-exponential concentration results in Proposition 4.2 in \cite{zhang2020concentration}.

\begin{lemma}\label{lem2}
    Suppose $\{ Y_i\}_{i = 1}^n$ are independently distributed as $\operatorname{NB}(\mu_i, k_i)$. Then for any non-random weights $w = (w_1, \cdots, w_n)^{\top} \in \mathbb{R}^n$ independent with $\{Y_i\}_{i = 1}^n$ and $t \geq 0$,
    \begin{equation*}
        \pr \bigg(\Big|\sum_{i=1}^{n} w_i (Y_{i} - \E Y_i) \Big| \geq t\bigg) \leq 2 \exp \left\{-\frac{1}{4}\left(\frac{t^{2}}{2 \sum_{i=1}^{n} w_i^2 a^2(\mu_i, k_i)} \wedge \frac{t}{\displaystyle\max _{1 \leq i \leq n}|w_i|a(\mu_i, k_i) }\right)\right\},
    \end{equation*}
    where $q_{i}:=\frac{\mu_{i}}{k_{i}+\mu_{i}} \in(0,1)$ and $a (\mu, k) := \left[ \log \frac{1 - (1 - q) / \sqrt[k]{2}}{q}\right]^{-1} + \frac{\mu}{\log 2}.$
\end{lemma}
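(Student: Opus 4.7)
The plan is to cast each centred observation $Y_i - \E Y_i$ as a sub-exponential random variable with Orlicz $\psi_1$-norm controlled by $a(\mu_i,k_i)$, and then invoke the Bernstein-type inequality for weighted independent sub-exponential sums given by Proposition 4.2 of \cite{zhang2020concentration}. Since the paper itself points to that proposition, the bulk of the work is the sub-exponentiality computation; the concentration tail is then a direct plug-in.

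First I would write down the moment generating function of $Y \sim \operatorname{NB}(\mu,k)$, namely $M_Y(t) = \left(\frac{1-q}{1 - q e^{t}}\right)^{k}$ for $t < \log(1/q)$, where $q = \mu/(k+\mu)$. To read off $\|Y\|_{\psi_1} = \inf\{K>0 : \E\exp(|Y|/K) \leq 2\}$, I would solve $M_Y(1/K) \leq 2$, which after taking $k$-th roots rearranges to $e^{1/K} \leq \bigl(1 - (1-q)/\sqrt[k]{2}\bigr)/q$, i.e. $K \geq \bigl[\log\frac{1 - (1-q)/\sqrt[k]{2}}{q}\bigr]^{-1}$. Hence $\|Y\|_{\psi_1}$ is bounded by this reciprocal log. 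For the centred variable, the triangle inequality for Orlicz norms together with $\|c\|_{\psi_1} = |c|/\log 2$ for a constant $c$ gives $\|Y - \E Y\|_{\psi_1} \leq \bigl[\log\frac{1-(1-q)/\sqrt[k]{2}}{q}\bigr]^{-1} + \mu/\log 2 = a(\mu,k)$.

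Because the weights $w_i$ are non-random and independent of the $Y_i$, homogeneity of the Orlicz norm yields $\|w_i(Y_i - \E Y_i)\|_{\psi_1} \leq |w_i|\, a(\mu_i,k_i)$. Inserting these into Proposition 4.2 of \cite{zhang2020concentration}, which for independent centred sub-exponential summands $X_i$ with $\|X_i\|_{\psi_1} \leq K_i$ delivers $\pr(|\sum X_i| \geq t) \leq 2\exp\bigl\{-\tfrac{1}{4}\bigl(\tfrac{t^2}{2\sum K_i^2} \wedge \tfrac{t}{\max_i K_i}\bigr)\bigr\}$, produces exactly the stated tail with $K_i = |w_i| a(\mu_i,k_i)$.

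The main obstacle is verifying the two-piece sub-exponential norm bound with the prescribed expression for $a(\mu,k)$: the MGF calculation must be done carefully on its strip of convergence $t < \log(1/q)$, and the additive $\mu/\log 2$ term must be produced via the Orlicz triangle inequality applied to $Y = (Y-\E Y) + \E Y$ rather than by an attempt to bound $\|Y - \E Y\|_{\psi_1}$ directly from its MGF (which would require shifting the MGF by $-\mu t$ and redoing the radius-of-convergence calculation). Once this clean decomposition is in place, the rest is a mechanical application of the cited Bernstein inequality, and the constants $1/4$ and $2$ in the exponent follow verbatim from Proposition 4.2 of \cite{zhang2020concentration}.
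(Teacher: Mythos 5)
Your proposal is correct and follows essentially the same route as the paper: compute $\|Y_i\|_{\psi_1}$ from the negative binomial moment generating function, account for centering through the $\mu/\log 2$ term in $a(\mu_i,k_i)$, and plug the resulting norms $|w_i|a(\mu_i,k_i)$ into Proposition 4.2 of \cite{zhang2020concentration}. Your explicit Orlicz triangle-inequality step for $\|Y-\E Y\|_{\psi_1}$ is the same idea the paper absorbs silently into the definition of $a(\mu,k)$, so no further changes are needed.
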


\begin{proof}
    We will use the sub-exponential norm. The moment generating function (MGF) for $Y_i$ is
    \begin{equation*}
         \E e^{s Y_i} = \left( \frac{1 - q_i}{1 - q_i e^s}\right)^{k_i}.
    \end{equation*}
    Then by letting $\E \exp (|Y_i| / t) \leq 2$, we have
    \begin{equation*}
        2 \geq \E \exp (|Y_i| / t) = \E \exp (Y_i / t) = \left( \frac{1 - q_i}{1 - q_i e^{1 / t}}\right)^{k_i},
    \end{equation*}
    which implies the sub-exponential norm for $Y_i$ is
    \begin{equation*}
        \| Y_i\|_{\psi_1} = \inf \{t>0: \E \exp (|Y_i| / t) \leq 2\} = \left[ \log \frac{1 - (1 - q_i) / \sqrt[k_i]{2}}{q_i}\right]^{-1}.
    \end{equation*}
    Using the definition of $a_i$, from Proposition 4.2 in \cite{zhang2020concentration}, we can immediately obtain the result in the Lemma.
\end{proof}

{
{It should be noted that $a_i = a(\mu_i, k_i)$ naturally has a lower and upper bound for any $i \in [n]$ since $\mu_i$ and $k_i$ are both bounded between $e^{m_{s, n}}$ and $e^{M_{s, n}}.$}}
\vspace{1ex}

Note that $Y_i$ is an unbounded random variable; the next step is to find a probabilistic bound for $M_{y, n} = \max_{i \in [n]} Y_i$. We will cite an important lemma for this type of problem. We say a distribution $P_{\gamma}$ is strongly discrete log-concave with $\gamma > 0$ if its density is strongly midpoint log-convex with the same $\gamma > 0$.

\begin{lemma}\label{log_concave.lem}[Concentration for strongly log-concave discrete distributions]
    Let $P_{\gamma}$ be any strongly log-concave discrete distribution index by $\gamma>0$ on $\mathbb{Z}^{n}$. Then for any function $f: \mathbb{R}^{n} \rightarrow \mathbb{R}$ that is L-Lipschitz with respect to Euclidean norm, we have for $X \sim P_{\gamma}$,
    \begin{equation*}
        \pr_{P_{\gamma}} \big( |f(X) - \E f(X)| \geq t \big) \leq 2 \exp \left\{ - \frac{\gamma t^2}{4 L^2}\right\}
    \end{equation*}
    for any $t > 0$.
\end{lemma}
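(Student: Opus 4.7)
The plan is to derive the sub-Gaussian concentration via a transportation--entropy inequality, mimicking the Bobkov--Götze route that, in the continuous setting, converts strong log-concavity into sub-Gaussian tails for Lipschitz functions. The starting observation is that the strongly midpoint log-convex hypothesis on $\psi = -\log p_{P_\gamma}$ is exactly the discrete analogue of $\gamma$-uniform convexity of the log-density, so one expects the same $1/\gamma$ rate on the Gaussian tail to hold on $\mathbb{Z}^n$.

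The first step is to establish a discrete Talagrand $T_2$ inequality
\begin{equation*}
W_2^2(P_\gamma, Q) \leq \frac{4}{\gamma}\, H(Q \mid P_\gamma)
\end{equation*}
for every probability measure $Q$ on $\mathbb{Z}^n$ absolutely continuous with respect to $P_\gamma$. This can be obtained by a midpoint interpolation argument: given an optimal coupling of $P_\gamma$ and $Q$, one uses the defining inequality for $\psi$, applied to the pair $(X,Y)$ realizing the coupling, to control the relative entropy by $\tfrac{\gamma}{4}\|X-Y\|_2^2$ in expectation, which is the Wasserstein cost. The second step is to apply the Bobkov--Götze duality, which says that $T_2$ with constant $C = 4/\gamma$ is equivalent to the sub-Gaussian moment generating function bound
\begin{equation*}
\mathbb{E}_{P_\gamma}\exp\bigl(\lambda(f - \mathbb{E}_{P_\gamma} f)\bigr) \leq \exp\!\left(\frac{2\lambda^2 L^2}{\gamma}\right), \qquad \forall \lambda \in \mathbb{R},
\end{equation*}
for every $L$-Lipschitz $f$. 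The third step is routine: apply Markov's inequality to both $e^{\lambda(f - \mathbb{E}f)}$ and $e^{-\lambda(f - \mathbb{E}f)}$, then optimize over $\lambda > 0$ to obtain the desired bound $2\exp(-\gamma t^2/(4L^2))$.

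The main obstacle is the first step, since in the continuous setting it is handled by Bakry--Émery or Caffarelli contraction, which do not transfer directly to the lattice. I would handle it by either (i) the Sammer--Tetali discrete transportation argument, which uses the midpoint log-convexity to run a modified Brunn--Minkowski inequality on $\mathbb{Z}^n$, or (ii) proving a modified log-Sobolev inequality for strongly midpoint log-convex densities and invoking the discrete Herbst argument; both routes lead to the same sub-Gaussian constant, so I would take whichever gives the cleaner prefactor. Once the $T_2$ inequality (or equivalent modified LSI) is in hand, the remainder of the proof is a short application of Markov's inequality and is essentially mechanical.
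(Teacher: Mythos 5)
The paper never proves this lemma: it is imported from the literature (the surrounding text says ``we will cite an important lemma'' and the nearby remark points to the discrete technique of \cite{moriguchi2020discrete} as used in \cite{zhang2022elastic}), so there is no in-paper argument to compare against. Judged on its own, your proposal has a genuine gap precisely at its central step. The entire content of the lemma is the passage from strong midpoint log-convexity of $\psi$ on $\mathbb{Z}^n$ to a transport--entropy (or modified log-Sobolev) inequality with constant of order $1/\gamma$; your Step 1 defers exactly this to ``Sammer--Tetali'' or ``modified LSI plus Herbst'' without stating or verifying that either framework applies to midpoint log-convex densities on the lattice with the advertised constant, and this is not a routine citation -- it is the theorem. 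Worse, the one concrete sketch you do give runs in the wrong direction: bounding the relative entropy by $\tfrac{\gamma}{4}\,\E\|X-Y\|_2^2$ along an optimal coupling would prove $H(Q\mid P_\gamma)\le \tfrac{\gamma}{4}W_2^2(P_\gamma,Q)$, which is the reverse of the Talagrand inequality $W_2^2(P_\gamma,Q)\le \tfrac{4}{\gamma}H(Q\mid P_\gamma)$ that you need, so as written the interpolation argument does not produce $T_2$.

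Two smaller inaccuracies would also need repair even if Step 1 were secured. Bobkov--G\"otze duality characterizes $T_1$ (the $W_1$ transport inequality), not $T_2$; what you actually use is only the chain $T_2\Rightarrow T_1\Rightarrow$ sub-Gaussian MGF for Lipschitz functions, which is fine but should be stated that way. And your constants do not close: $T_2$ with constant $4/\gamma$ yields, via Bobkov--G\"otze, $\E\, e^{\lambda(f-\E f)}\le e^{\lambda^2L^2/\gamma}$, which after the Chernoff optimization gives the claimed $2e^{-\gamma t^2/(4L^2)}$, whereas the bound $e^{2\lambda^2L^2/\gamma}$ you wrote only gives $2e^{-\gamma t^2/(8L^2)}$. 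In short, the high-level route (transport or modified LSI, then Herbst/Chernoff) is a legitimate way such discrete concentration results are proved, but the proposal leaves the decisive implication unproved and mis-states the one step it sketches.
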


{
{
\begin{lemma}\label{unbounded_y.lem}
    The maximal of the response $M_{y, n} = \max_{i \in [n]} Y_i$ has the concentration
    \begin{equation*}
        \pr \left\{ M_{y, n} - \left( 2 \max_{i \in [n]} \left[ a(\mu_i, k_i) - \frac{\mu_i}{\log 2} \right] \big[ \log (2n) + \sqrt{2 \log (2n)}\big] + \max_{i \in [n]} \mu_i \right)  >  t\right\} \leq  e^{- \gamma t^2 / 4}
    \end{equation*}
    for any $t > 0$.
\end{lemma}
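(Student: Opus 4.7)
The plan is to combine Lemma \ref{log_concave.lem} applied to the $1$-Lipschitz sample maximum with a sub-exponential maximal inequality that bounds $\E M_{y,n}$ by the non-random expression in the statement.

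First I would observe that $f(y)=\max_{i\in[n]} y_i$ is $1$-Lipschitz with respect to the Euclidean norm, since $|\max_i y_i-\max_i z_i|\le \max_i|y_i-z_i|\le \|y-z\|_2$. Under the standing assumption that the joint density $p_{\mathbb Y}$ is strongly midpoint log-convex with parameter $\gamma$, Lemma \ref{log_concave.lem} applied to $f$ with $L=1$ gives the one-sided deviation
\[
\pr\bigl(M_{y,n}-\E M_{y,n}> t\bigr)\le e^{-\gamma t^{2}/4}, \qquad t>0,
\]
where the factor $2$ of the two-sided version is absorbed into the one-sided Herbst argument.

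It then remains to prove the deterministic inequality
\[
\E M_{y,n}\le 2K\bigl(\log(2n)+\sqrt{2\log(2n)}\bigr)+\max_{i\in[n]}\mu_i,
\]
with $K:=\max_i\bigl[a(\mu_i,k_i)-\mu_i/\log 2\bigr]=\max_i\|Y_i\|_{\psi_1}$, the last equality coming from the proof of Lemma \ref{lem2}. I would use the trivial bound $M_{y,n}\le \max_i\mu_i+\max_i(Y_i-\mu_i)$ and reduce the task to estimating $\E\max_i(Y_i-\mu_i)$. Applying Lemma \ref{lem2} with the unit vector $w=e_j$ yields the single-coordinate Bernstein-type tail
\[
\pr\bigl(|Y_j-\mu_j|>s\bigr)\le 2\exp\Bigl\{-\tfrac14\bigl(\tfrac{s^{2}}{2a_j^{2}}\wedge\tfrac{s}{a_j}\bigr)\Bigr\}.
\]
A union bound over $j\in[n]$, followed by integrating the resulting tail and splitting the range of $s$ at the transition point $s\asymp 2K$ between the sub-Gaussian and sub-exponential regimes, produces the two summands $\sqrt{2\log(2n)}$ (from the Gaussian part) and $\log(2n)$ (from the exponential part). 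The passage from $a_j$ to $K$ and the overall factor $2$ in front are absorbed via the triangle inequality $\|Y_j-\mu_j\|_{\psi_1}\le\|Y_j\|_{\psi_1}+\mu_j/\log 2$.

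The main obstacle will be the clean bookkeeping in this maximal inequality: carrying out the two-regime tail integration so that the constant in front of $\log(2n)+\sqrt{2\log(2n)}$ lands exactly at $2K$ rather than a larger universal constant, and handling the centring correction $a_j\rightsquigarrow K$ without slack. Once $\E M_{y,n}$ is controlled by the claimed expression, inserting it into the Lipschitz concentration inequality above yields the stated bound.
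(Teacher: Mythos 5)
Your overall architecture matches the paper's: the maximum is $1$-Lipschitz for the Euclidean norm, so Lemma \ref{log_concave.lem} (under the standing strongly midpoint log-convex assumption) gives $\pr(M_{y,n}-\E M_{y,n}>t)\le e^{-\gamma t^2/4}$, and the lemma then reduces to the deterministic bound $\E M_{y,n}\le 2K\big[\log(2n)+\sqrt{2\log(2n)}\big]+\max_i\mu_i$ with $K=\max_i\|Y_i\|_{\psi_1}=\max_i\big[a(\mu_i,k_i)-\mu_i/\log 2\big]$. That first half is correct and is exactly what the paper does.

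The gap is in your route to the expectation bound, and it is the very point you flag as "the main obstacle": a union bound over the coordinate tails of Lemma \ref{lem2} followed by two-regime tail integration cannot land on the stated constant. Two reasons. First, the tails in Lemma \ref{lem2} are parameterized by $a_i=\|Y_i\|_{\psi_1}+\mu_i/\log 2$, and the triangle inequality you invoke, $\|Y_i-\mu_i\|_{\psi_1}\le\|Y_i\|_{\psi_1}+\mu_i/\log 2$, is precisely how $a_i$ arises; it does not let you pass back from $a_i$ to $K=a_i-\mu_i/\log 2$. Second, converting MGF to tail, taking a union bound, and integrating inflates the numerical constants (already in the sub-Gaussian regime the exponent $\tfrac14\cdot\tfrac{s^2}{2a_i^2}$ forces a threshold of order $2\sqrt{2}\,a_i\sqrt{\log(2n)}$ rather than $2K\sqrt{2\log(2n)}$), so you would prove a weaker inequality than the one stated. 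The paper avoids both issues by staying at the MGF level: centering at most doubles the sub-exponential norm, so $Y_i-\mu_i$ is $\operatorname{subE}(2\|Y_i\|_{\psi_1})$, hence sub-Gamma with variance factor $4\|Y_i\|_{\psi_1}^2$ and scale $2\|Y_i\|_{\psi_1}$ (Example 5.3 of \cite{zhang2020concentration}), and the sub-Gamma maximal expectation inequality (Corollary 7.3 there) gives exactly $2K\sqrt{2\log(2n)}+2K\log(2n)$. Replacing your tail-integration step by this MGF-based maximal inequality closes the gap; as written, the proposal does not.
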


\begin{proof}
    For the upper bound of expectation, we first note that $Y_i - \E Y_i \sim \operatorname{subE}(2 \| Y_i\|_{\psi_1})$ with $ \| Y_i\|_{\psi_1}$ has calculated in Lemma \ref{lem2}, then we have $Y_i - \E Y_i \sim \operatorname{sub \Gamma}(4 \| Y_i\|_{\psi_1}^2, 2 \| Y_i\|_{\psi_1})$ by Example 5.3 in \cite{zhang2020concentration}, which further gives
    \begin{equation*}
        \begin{aligned}
            \E M_{y, n} & \leq \E \max_{i \in [n]} (Y_i - \E Y_i ) + \max_{i \in [n]} \E Y_i \\
            & \leq \big( 2 \cdot \max_{i \in [n]} 4 \| Y_i\|_{\psi_1}^2 \cdot \log (2n) \big)^{\frac{1}{2}} + \max_{i \in [n]} 2 \| Y_i\|_{\psi_1} \cdot \log (2n)  + \max_{i \in [n]} \mu_i \\
            & = 2 \max_{i \in [n]} \| Y_i\|_{\psi_1} \big[ \log (2n) + \sqrt{2 \log (2n)}\big] + \max_{i \in [n]} \mu_i,
        \end{aligned}
    \end{equation*}
    where the second $\leq $ is by Corollary 7.3 in \cite{zhang2020concentration} and the bound in the lemma comes from the explicit expression in Lemma \ref{lem2}.
    \vspace{1ex}
    
    \noindent By implementing Lemma \ref{log_concave.lem}, the remaining we need to do is verifying $\mathbb{Y} := (Y_1, \ldots, Y_n)^{\top} \in \mathbb{Z}^n$ belongs to some strongly log-concave discrete distribution $P_{\gamma}$ with the specifying $\gamma > 0$ after we take $f : (x_1, \ldots, x_n) \mapsto \max_{i \in [n]} x_i$ which is $1$-Lipschitz. By the definition, the derivative of log-density for $y := (y_1, \ldots, y_n)^{\top}$ is
    \begin{equation*}
        \psi^{\prime} (y_i) := \left. \frac{\partial \log p(y)}{\partial y} \right|_{y_i} = \log \frac{\Gamma (k_i + y_i)}{\Gamma(1 + y_i)} - y_i \log (k_i + \mu_i),
    \end{equation*}
    then the Taylor expansion gives
    \begin{gather*}
        \psi(y)=\psi\left(\left[\frac{1}{2} x+\frac{1}{2} y\right]\right)+\frac{1}{2} \psi^{\prime}\left(\left[\frac{1}{2} x+\frac{1}{2} y\right]\right)(y-x)+\frac{1}{8}(y-x)^{2} \psi^{\prime \prime}\left(a_{1}\right),\\
        \psi(x)=\psi\left(\left\lfloor\frac{1}{2} x+\frac{1}{2} y\right\rfloor\right)+\frac{1}{2} \psi^{\prime}\left(\left\lfloor\frac{1}{2} x+\frac{1}{2} y\right\rfloor\right)(x-y)+\frac{1}{8}(y-x)^{2} \psi^{\prime \prime}\left(a_{2}\right)
    \end{gather*}
    where $a_{1}=t_{1} y+\left(1-t_{1}\right)(x+y) / 2$,  $a_{2}=t_{2} y+\left(1-t_{1}\right)(x+y) / 2$ with $t_1, t_2 \in [0, 1]$. Define the difference function
    \begin{equation*}
        \Delta(x, y) := \frac{x-y}{4}\left[\psi^{\prime}\left(\left[\frac{1}{2} x+\frac{1}{2} y\right]\right)-\psi^{\prime}\left(\left[\frac{1}{2} x+\frac{1}{2} y\right]\right)\right]+\frac{\psi^{\prime \prime}\left(a_{1}\right)+\psi^{\prime \prime}\left(a_{2}\right)}{16}(y-x)^{2},
    \end{equation*}
    the Taylor expression above immediately implies 
    \begin{equation*}
        \Delta(x, y) \geq |x-y|^{2}\left\{\frac{\psi^{\prime \prime}\left(a_{1}\right)+\psi^{\prime \prime}\left(a_{2}\right)}{16}-\sup _{x \neq y ; x, y \in \mathbb{Z}^{n}} \frac{\left|\left[\psi^{\prime}(\lfloor(x+y) / 2\rfloor)-\psi^{\prime}(\lceil(x+y) / 2\rceil)\right]\right|}{4|x-y|}\right\}.
    \end{equation*}
    Let
    \begin{equation*}
        \begin{aligned}
            C_{\psi} &:=\sup _{x \neq y ; x, y \in \mathbb{Z}^{n}} \frac{\left|\left[\psi^{\prime}(\lfloor(x+y) / 2\rfloor)-\psi^{\prime}(\lceil(x+y) / 2\rceil)\right]\right|}{4|x-y|} \\
            & = \sup _{x \neq y ; x, y \in \mathbb{Z}^{n}} \left|\left[\log \frac{\Gamma(k_i +\lfloor(x+y) / 2\rfloor) \Gamma(\lceil(x+y) / 2\rceil+1)}{\Gamma(k_i +\lceil(x+y) / 2\rceil) \Gamma(\lfloor(x+y) / 2\rfloor+1)}-\frac{(\lfloor(x+y) / 2\rfloor-\lceil(x+y) / 2\rceil)}{\log ^{-1}\left(k_i + \mu_{i}\right)}\right]\right| / 4|x-y|,
        \end{aligned}
    \end{equation*}
    and it is not hard to see $C_{\psi} \approx \frac{\left|\log \left(k_i +\mu_{i}\right)\right|}{4} $ or 0. Besides,
    \begin{equation*}
        \begin{aligned}
            \psi^{\prime \prime}(y) & :=\left.\frac{\partial^{2} \log p(y)}{\partial y^{2}}\right|_{y=y_{i}}=\frac{d}{d y_{i}} \log \frac{\Gamma\left(\theta+y_{i}\right)}{\Gamma\left(y_{i}+1\right)} \\
            & = \sum_{m = 1}^{\infty}\left(\frac{1}{m+1}-\frac{1}{m+ k_i +y_{i}}\right)-\sum_{m=1}^{\infty}\left(\frac{1}{m+1}-\frac{1}{m+y_{i}+1}\right) \\
            & = \sum_{m = 1}^{\infty}\left(\frac{1}{m + y_{i} + 1}-\frac{1}{m + k_i +y_{i}}\right) \geq \inf _{y_{i} \in \mathbb{Z}} \sum_{m = 1}^{\infty}\left(\frac{1}{m + y_{i} + 1}-\frac{1}{m + k_i +y_{i}}\right) = C_{\psi^{\prime \prime}}.
        \end{aligned}
    \end{equation*}
    Now, we have got
    \begin{equation*}
        \Delta(x, y) \geq|x-y|^{2}\left\{\frac{\psi^{\prime \prime}\left(a_{1}\right)+\psi^{\prime \prime}\left(a_{2}\right)}{16}-C_{\psi}\right\} \geq|x-y|^{2}\left(\frac{C_{\psi^{\prime \prime}}}{8}-C_{\psi}\right)
    \end{equation*}
    which gives $\gamma=: \frac{C_{\psi^{\prime \prime}}}{8}-C_{\psi}>0$ from the strong log-concave assumption for $\mathbb{Y}$, if $C_{\psi} \approx \frac{|\log (k_i +\mu_{i}) |}{4}$ is small. Hence, we can conclude from Lemma \ref{log_concave.lem} and the upper bound of $\E M_{y, n}$
    \begin{equation*}
        \begin{aligned}
            & \pr \left\{ M_{y, n} - \left( 2 \max_{i \in [n]} \left[ a(\mu_i, k_i) - \frac{\mu_i}{\log 2} \right] \big[ \log (2n) + \sqrt{2 \log (2n)}\big] + \max_{i \in [n]} \mu_i \right)  >  t\right\} \\
            \leq & \pr (M_{y, n} - \E M_{y, n} > t ) \leq  e^{- \gamma t^2 / 4}
        \end{aligned}
    \end{equation*} 
    which is exact the result in the lemma. 
\end{proof}
}}

\begin{rem}
    For $M_{y, n}$, it is distributed as sub-Gumbel, which is rarely studied by research. Another way to deal with using the extreme value theory (EVT) technique, we note that for any $t \in \mathbb{R}$
    \begin{equation*}
        \begin{aligned}
            \pr (M_{y, n} - \E M_{y, n} > t) & = 1 - \prod_{i = 1}^n \pr \big( Y_i \leq t + \E M_{y, n} \big) \\
            & = 1 -  \prod_{i = 1}^n \bigg[ 1 - \exp \Big\{ - \frac{1}{4} \Big( \frac{(t + \E M_{y, n} - \mu_i)^2}{2 a_i^2} \wedge \frac{t + \E M_{y, n} - \mu_i}{a_i} \Big) \Big\} \bigg].
        \end{aligned}
    \end{equation*}
    If $Y_i$ is i.i.d, then in asymptotic sense,
    \begin{equation*}
        \begin{aligned}
            \pr (M_{y, n} - \E M_{y, n} > t) & = 1 - \Big[ 1 - \pr \big( Y_1 > t + \E M_{y, n} \big) \big]^n \\
            & \sim 1 - \exp \left\{ - n \pr \big( Y_1 > t + \E M_{y, n} \big) \right\} + o(1) \\
            & \sim 1 - \exp \left\{ - n \exp \Big(  - \frac{1}{4} \frac{(t + \E M_{y, n} - \mu_1)^2}{2 a_1^2} \wedge \frac{t + \E M_{y, n} - \mu_1}{a_1} \Big) \right\}.
        \end{aligned}
    \end{equation*}
    Unfortunately, this technique cannot be used in the above lemma since: (i) we need non-asymptotic version inequality instead of a vague expression with $n \rightarrow \infty$; (ii) $\{ Y_i \}$ is not an i.i.d series, and then EVT theory will not be easily used in this particular setting. Hence we adopt a discrete technique which has been use in \cite{moriguchi2020discrete} and fully illustrated in \cite{zhang2022elastic}.
\end{rem}

The stochastic Lipschitz conditions are established by using the properties of $\partial_1 \gamma(s, y)$ and $\partial_2 \gamma(s, y)$. As we said before, they are divided into two parts. The linear parts in them can be solved by the concentration inequality for NB variables given in Lemma \ref{lem2}, but the non-linear part $\nu(s, y)$ needs some more advanced tools regarding the empirical process. They are given as the following lemmas.
\begin{lemma}[The (3.12) in \cite{sen2018gentle}]\label{ex.lem1}
    Suppose $X_1(\omega), \cdots, X_n(\omega) \in \mathbb{R}$ are zero-mean independent stochastic processes indexed by $\omega \in \Omega$. If there exist $M_0$ and $S_0$ satisfying $|X_i(\omega)| \leq M_0$ and $\sum_{i = 1}^n \var \big( X_i(\omega)\big) \leq S_0^2$ for all $\omega \in \Omega$. Denote $S_n = \sup_{\omega \in \Omega} \big| \sum_{i = 1}^n  X_i(\omega)\big|$, then for any $t > 0$,
    \begin{equation*}
        \pr \big( S_n \geq 2 \E S_n + S_0 \sqrt{2t} + 4 M_0 t\big) \leq e^{-t}.
    \end{equation*}
\end{lemma}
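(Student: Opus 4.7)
The plan is to recognize this as a standard Bernstein-type concentration bound for the supremum of an empirical process (a version of Talagrand's / Bousquet's inequality), and to derive the stated deviation inequality by inverting the underlying Bernstein-form tail bound and applying AM-GM to absorb a term into $\E S_n$.

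First I would invoke Bousquet's inequality in its Bennett-form for the supremum of a centered, independent process under the given boundedness assumption $|X_i(\omega)| \leq M_0$ and weak variance bound $\sum_{i=1}^n \var(X_i(\omega)) \leq S_0^2$. This yields, for every $u > 0$,
\begin{equation*}
    \pr \bigl( S_n - \E S_n \geq u \bigr) \leq \exp\!\left( - \frac{u^2}{2 \bigl(S_0^2 + 2 M_0 \E S_n\bigr) + \tfrac{2}{3} M_0 u}\right),
\end{equation*}
which is the Bernstein-type bound for the centered supremum with effective variance $V := S_0^2 + 2 M_0 \E S_n$. Establishing this concentration itself is the serious analytic step, and is typically proved via the entropy / modified log-Sobolev method on product measures; I would simply cite Bousquet's theorem rather than reprove it, since only its consequences are needed here.

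Next I would invert the tail to a quantile form. Setting the exponent equal to $-t$ and solving the resulting quadratic in $u$, one obtains that with probability at least $1 - e^{-t}$,
\begin{equation*}
    S_n \leq \E S_n + \sqrt{2 V t} + \tfrac{1}{3} M_0 t = \E S_n + \sqrt{2 (S_0^2 + 2 M_0 \E S_n)\, t} + \tfrac{1}{3} M_0 t.
\end{equation*}
The key algebraic step is then to split the square root via subadditivity, $\sqrt{a + b} \leq \sqrt{a} + \sqrt{b}$, to get $\sqrt{2(S_0^2 + 2 M_0 \E S_n) t} \leq S_0 \sqrt{2t} + 2 \sqrt{M_0 t \cdot \E S_n}$, and then apply the elementary AM-GM inequality $2 \sqrt{M_0 t \cdot \E S_n} \leq \E S_n + M_0 t$. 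Plugging these in converts the mixed $\sqrt{M_0 t \cdot \E S_n}$ term into a copy of $\E S_n$ plus a linear-in-$t$ contribution, absorbed into the $M_0 t$ coefficient.

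Putting the pieces together yields
\begin{equation*}
    S_n \leq 2 \E S_n + S_0 \sqrt{2t} + M_0 t + \tfrac{1}{3} M_0 t \leq 2 \E S_n + S_0 \sqrt{2t} + 4 M_0 t
\end{equation*}
with probability at least $1 - e^{-t}$, which is precisely the advertised bound (in fact with the sharper constant $4/3$ in front of $M_0 t$). The main obstacle is the first step: Bousquet's inequality is nontrivial and its sharp form with the correct variance proxy $V = S_0^2 + 2 M_0 \E S_n$ is essential, since using only the naive $V = S_0^2$ would fail to account for the envelope-induced variance inflation in the supremum. The remainder is purely elementary manipulation.
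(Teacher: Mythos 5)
Your derivation is correct: the paper does not prove this lemma at all (it is imported verbatim as inequality (3.12) from the cited lecture notes of Sen), and your route --- Bousquet's Bernstein-form inequality with variance proxy $S_0^2 + 2M_0\,\E S_n$, followed by $\sqrt{a+b}\leq\sqrt{a}+\sqrt{b}$ and AM--GM to absorb the cross term into $\E S_n$ --- is exactly the standard way the cited bound is obtained, and it even yields a constant smaller than $4$ in front of $M_0 t$. The only minor points to note are that the two-sided supremum $S_n=\sup_{\omega}\bigl|\sum_i X_i(\omega)\bigr|$ should formally be handled by applying Bousquet's theorem to the symmetrized class $\{\pm X(\omega)\}$, and that a careful inversion of the Bennett/Bernstein exponent gives a coefficient $2M_0t/3$ rather than $M_0t/3$ before the AM--GM step; neither affects the conclusion since the final coefficient stays below $4$.
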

A map $\phi: \mathbb{R} \rightarrow \mathbb{R}$ is called a contraction if $|\phi(s)-\phi(t)| \leq|s-t|$ for all $s, t \in \mathbb{R}$. And in the following lemmas $\varepsilon_1, \cdots, \varepsilon_n$ are always i.i.d. Rademacher variables.
\begin{lemma}[Theorem 2.2 in \cite{chi2010stochastic}]\label{ex.lem2}
    Let $\mathcal{T} \subseteq V^n$ be a bounded set and $f_1, \cdots, f_n$ be functions $V \rightarrow \mathbb{R}$ such that $f_i$ is $(M_i, \ell_{\infty})$-Lipschitz with $f_i(0) = 0$. For $j = 1, \cdots, k \in \mathbb{N}$, let $T_j = \{ (t_{1j}, \cdots, t_{nj}) : (t_1, \cdots, t_n) \in \mathcal{T}\} \subseteq \mathbb{R}^n$. Then
    \begin{equation*}
        \E \sup_{t \in \mathcal{T}} \bigg| \sum_{i = 1}^n \varepsilon_i f_i(t_i)\bigg| \leq \beta_k \sum_{j = 1}^k \E \sup_{s \in T_j} \bigg| \sum_{i = 1}^n \varepsilon_i M_i s_i \bigg|,
    \end{equation*}
    where $\beta_k$ is a universal constant that can be set no greater than $3^k + 3^{k - 1} - 2^k$.
\end{lemma}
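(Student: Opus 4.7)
My plan is to reduce the $k$-dimensional contraction statement to the classical scalar Rademacher--Talagrand contraction principle by induction on the dimension $k$ of the parameter space $V$. For the base case $k = 1$, functions $f_i : \mathbb{R} \to \mathbb{R}$ that are $M_i$-Lipschitz with $f_i(0) = 0$ give precisely the classical symmetrized Talagrand inequality with constant $\beta_1 = 2$, which matches the claimed formula $3 + 1 - 2 = 2$. The remaining task is to peel the $k$ coordinates off one at a time while controlling how the universal constant inflates at each step.

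The key device for the inductive step is a telescoping decomposition of $f_i$ along the coordinates of $V = \mathbb{R}^k$. Writing $P_j t = (t_1, \dots, t_j, 0, \dots, 0)$ for the projection onto the first $j$ coordinates and $g_{i,j}(t_i) := f_i(P_j t_i) - f_i(P_{j-1} t_i)$, the identity $f_i(0) = 0$ yields $f_i(t_i) = \sum_{j=1}^{k} g_{i,j}(t_i)$, and each $g_{i,j}$ is $M_i$-Lipschitz in its $j$-th argument with the remaining coordinates held fixed. After splitting the supremum via the triangle inequality into $k$ pieces, the $j$-th piece is handled by conditioning on the complementary coordinates (which then merely index the family of scalar Lipschitz maps) and invoking the scalar contraction principle, converting $g_{i,j}$ into the target linear map $M_i s_i$ supported on the coordinate projection $T_j$.

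The main obstacle is the bookkeeping for the constant $\beta_k$. A naive iteration of the scalar Talagrand bound would give only $2^k$, but the pieces $g_{i,j}$ do not vanish at the reference point used to anchor the scalar contraction, so at each peeling step one must absorb a residual term by an additional symmetrization. This inflates the per-coordinate factor from $2$ to roughly $3$, and a careful recursion of the form $\beta_k \le 3 \beta_{k-1} + 3^{k-1} - 2^{k-1}$ telescopes to the stated closed form $\beta_k \le 3^k + 3^{k-1} - 2^k$. An attractive feature of this conditioning strategy is that the supremum remains over the single set $\mathcal{T}$ throughout (no Cartesian enlargement of the index set is needed), which is precisely what allows the coordinate projections $T_j$ alone to appear on the right-hand side of the final bound. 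The whole argument is essentially a symmetrization--contraction bootstrap, so the only genuinely delicate point is verifying that the residual at each step can be symmetrized without introducing coupling across coordinates.
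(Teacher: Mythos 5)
The paper does not prove this lemma at all: it is imported verbatim as Theorem 2.2 of \cite{chi2010stochastic} and used as a black box in the proof of Corollary \ref{thm1}, so your attempt has to stand on its own rather than be matched against an in-paper argument. As sketched, it does not stand. The decisive gap is the step where you apply the scalar contraction principle to the telescoped pieces $g_{i,j}(t_i) = f_i(P_j t_i) - f_i(P_{j-1} t_i)$ ``by conditioning on the complementary coordinates.'' The scalar Ledoux--Talagrand contraction (the paper's Lemma \ref{ex.lem3}) requires each summand to be a fixed contraction $\phi_i$ evaluated at the $i$-th coordinate of the index point; here $g_{i,j}$ depends on $t_{i1},\dots,t_{i,j-1}$ as well as on $t_{ij}$, and inside the joint supremum over the coupled set $\mathcal{T}$ you cannot freeze those extra coordinates --- they vary with the very point $t$ over which you are taking the supremum. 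Decoupling them honestly would force you to enlarge the index set (e.g.\ to a Cartesian product), which is exactly what you claim to avoid, and avoiding that enlargement is the whole content of the multivariate result; this is why the constant grows geometrically in $k$ and why the paper cites Chi's theorem instead of deriving it from the scalar principle. Your sketch names this as ``the only genuinely delicate point'' but offers no mechanism (the ``additional symmetrization absorbing a residual'' is not specified) for resolving it.

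A secondary, concrete problem is the constant bookkeeping: the recursion you propose, $\beta_k \le 3\beta_{k-1} + 3^{k-1} - 2^{k-1}$ with $\beta_1 = 2$, gives $\beta_2 = 7$ and $\beta_3 = 26$, whereas the stated closed form $3^k + 3^{k-1} - 2^k$ equals $8$ and $28$ at $k=2,3$; so it does not ``telescope to the stated closed form'' as claimed (the recursion that does is $\beta_k \le 3\beta_{k-1} + 2^{k-1}$). A smaller constant would of course not contradict the lemma, but the mismatch shows the inflation-by-one-symmetrization-per-coordinate accounting has not actually been carried out, and without a correct treatment of the coupling issue above there is no derivation to attach any constant to. The base case $k=1$ with constant $2$ is fine.
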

\begin{lemma}[Theorem 4.12 in \cite{ledoux2013probability}]\label{ex.lem3}
    Let $F: \mathbb{R}_{+} \rightarrow \mathbb{R}_{+}$ be convex and increasing. Let further $\phi_{i}: \mathbb{R} \rightarrow \mathbb{R}, i \leq n$ be contractions such that $\phi_{i}(0)=0$. Then, for any bounded subset $\mathcal{T}$ in $\mathbb{R}^{n}$,
    \begin{equation*}
        \E F \bigg( \frac{1}{2} \sup_{\mathcal{T}} \Big| \sum_{i = 1}^n \varepsilon_i \phi_i(t_i) \Big|\bigg) \leq \E F \bigg( \sup_{\mathcal{T}} \Big| \sum_{i = 1}^n \varepsilon_i t_i \Big|\bigg).
    \end{equation*}
\end{lemma}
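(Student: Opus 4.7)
The plan is to prove the contraction principle by induction on $n$, with the factor $\tfrac{1}{2}$ on the left-hand side emerging from the fact that the contractions $\phi_i$ are not assumed to be odd.

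The base case $n=1$ is direct: the claim reduces to $F\bigl(\tfrac{1}{2}\sup_{t \in \mathcal{T}}|\phi_1(t_1)|\bigr) \leq F\bigl(\sup_{t \in \mathcal{T}}|t_1|\bigr)$, which follows from the Lipschitz bound $|\phi_1(t_1)| = |\phi_1(t_1) - \phi_1(0)| \leq |t_1|$ and the monotonicity of $F$ (indeed the $\tfrac12$ is not needed here). For the inductive step, I would fix an index $i$, condition on all Rademacher variables $\varepsilon_j$ with $j \neq i$, and reduce the task to a one-coordinate comparison. Writing $u(t) := \sum_{j \neq i}\varepsilon_j \phi_j(t_j)$ and taking the expectation over $\varepsilon_i \in \{\pm 1\}$ explicitly, this one-coordinate comparison takes the form
\begin{equation*}
F\bigl(\tfrac12 \textstyle\sup_t |u(t)+\phi_i(t_i)|\bigr) + F\bigl(\tfrac12 \textstyle\sup_t |u(t)-\phi_i(t_i)|\bigr) \leq F\bigl(\textstyle\sup_t |u(t)+t_i|\bigr) + F\bigl(\textstyle\sup_t |u(t)-t_i|\bigr).
\end{equation*}
A Fubini/iteration argument then propagates this reduction through all coordinates, and the inductive hypothesis handles the remaining contractions bundled into $u(t)$.

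The heart of the proof is the pointwise comparison behind the last display. Picking near-extremizers $t^{+}, t^{-} \in \mathcal{T}$ that realize the two suprema on the left-hand side, and using the contraction property $|\phi_i(a)-\phi_i(b)| \leq |a-b|$ together with $\phi_i(0)=0$, I would exhibit a matching pair of points on the right-hand side (possibly after a sign swap between the $+$ and $-$ streams) whose $|u(\cdot)\pm t_i|$ values dominate the left once the factor $\tfrac12$ is applied. A four-case analysis on the signs of $u(t^{\pm})$ and on the signs of $\phi_i(t_i^{\pm})$ relative to $t_i^{\pm}$ is required. Once the pointwise inequality is in place, convexity and monotonicity of $F$ transfer it to the $F$-level via a standard majorization argument combined with Jensen in the reverse direction.

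The main obstacle will be the combinatorial case analysis in this pointwise step. Because $\phi_i$ need not be odd, the values $\phi_i(t_i^+)$ and $-\phi_i(t_i^-)$ cannot simply be absorbed into $\pm t_i^+$ and $\mp t_i^-$ with the same sign pattern, so one must allow cross-matching between the two suprema on the right-hand side; the factor $\tfrac12$ inside $F$ on the left is exactly the slack that makes one configuration dominate across all four cases and lets the averaging close the argument. Once this matching is secured, all that remains is a routine application of Fubini together with the induction to complete the proof.
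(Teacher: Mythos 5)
The paper does not prove this lemma at all---it is quoted verbatim as Theorem 4.12 of Ledoux and Talagrand and used as a black box---so your attempt can only be measured against the known proof of the cited result. Your overall plan (condition on all but one Rademacher sign, replace $\phi_i(t_i)$ by $t_i$ one coordinate at a time via a two-point comparison proved by a case analysis on near-maximizers, then close with convexity of $F$) is the right skeleton, but the way you handle the absolute value and the factor $\tfrac12$ contains a genuine gap. In the actual proof the factor $\tfrac12$ is paid \emph{once, at the very beginning}: one writes $\sup_{\mathcal T}|S| \le (\sup_{\mathcal T} S)^{+} + (\sup_{\mathcal T}(-S))^{+}$ and uses convexity of $F$ to get $F(\tfrac12\sup|S|) \le \tfrac12 F((\sup S)^{+}) + \tfrac12 F((\sup(-S))^{+})$, thereby reducing the two-sided statement to a \emph{one-sided, lossless} comparison $\E F\bigl((\sup_{\mathcal T}\sum_i\varepsilon_i\phi_i(t_i))^{+}\bigr) \le \E F\bigl((\sup_{\mathcal T}\sum_i\varepsilon_i t_i)^{+}\bigr)$, applied once to the $\phi_i$ and once to the $-\phi_i$. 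Only that one-sided inequality is iterated over coordinates, and it composes because it has no loss factor.

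Your version instead keeps the absolute value and the $\tfrac12$ inside every step of the induction, and the one-coordinate display you propose does not compose. After one application of
\begin{equation*}
F\bigl(\tfrac12 \textstyle\sup_t |u(t)+\phi_i(t_i)|\bigr) + F\bigl(\tfrac12 \textstyle\sup_t |u(t)-\phi_i(t_i)|\bigr) \leq F\bigl(\textstyle\sup_t |u(t)+t_i|\bigr) + F\bigl(\textstyle\sup_t |u(t)-t_i|\bigr),
\end{equation*}
the right-hand side has no remaining factor $\tfrac12$, yet $u(t)=\sum_{j\neq i}\varepsilon_j\phi_j(t_j)$ still contains $n-1$ contractions that must be removed; the inductive hypothesis cannot be invoked on it because the hypothesis carries the $\tfrac12$ on its left side only. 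If instead you restore the $\tfrac12$ on both sides so that the steps chain, then (absorbing it into $G(x)=F(x/2)$) you are asserting the two-sided comparison with \emph{no} loss constant, which is false for general non-odd contractions---that failure is precisely why the theorem carries the $\tfrac12$. So the constant is not, as you suggest, slack generated inside the four-case sign analysis; it is the price of splitting $\sup|\cdot|$ into two one-sided suprema before the induction starts. The fix is to perform that convexity splitting first and then run your conditioning-plus-case-analysis argument on the one-sided functional $F\bigl((\cdot)^{+}\bigr)$, where the two-point lemma holds with constant one.
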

\begin{lemma}[Lemma 5.2 in \cite{massart2000some}]\label{ex.lem4}
    Let $\mathcal{A}$ be some finite subset of $\mathbb{R}^{n}$, let $R=\sup _{a \in \mathcal{A}}\left[\sum_{i=1}^{n} a_{i}^{2}\right]^{1 / 2}$, then
    \begin{equation*}
        \E \bigg[\sup _{a \in \mathcal{A}} \sum_{i=1}^{n} \varepsilon_{i} a_{i}\bigg] \leq R \sqrt{2 \log \big( \operatorname{card}(\mathcal{A})\big)}.
    \end{equation*}
\end{lemma}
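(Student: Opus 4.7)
The plan is to run the classical sub-Gaussian maximal inequality argument based on the moment generating function (MGF), since the suprema here is actually a maximum over a finite index set, making the MGF-plus-union-bound technique very clean.

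First I would fix any $a \in \mathcal{A}$ and control the MGF of $Z_a := \sum_{i=1}^n \varepsilon_i a_i$. Since $\varepsilon_i \in \{-1, +1\}$ are i.i.d.\ Rademacher, Hoeffding's lemma gives $\E \exp(\lambda \varepsilon_i a_i) \leq \exp(\lambda^2 a_i^2 / 2)$ for every $\lambda \in \mathbb{R}$. Taking the product over $i$ by independence yields
\begin{equation*}
    \E \exp(\lambda Z_a) \leq \exp\!\Big( \tfrac{\lambda^2}{2} \textstyle\sum_{i=1}^n a_i^2 \Big) \leq \exp\!\Big( \tfrac{\lambda^2 R^2}{2}\Big),
\end{equation*}
uniformly in $a \in \mathcal{A}$, by the definition of $R$.

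Next, I would combine Jensen's inequality with a crude union bound over $\mathcal{A}$. For $\lambda > 0$,
\begin{equation*}
    \exp\!\Big( \lambda \E \sup_{a \in \mathcal{A}} Z_a\Big) \leq \E \exp\!\Big(\lambda \sup_{a \in \mathcal{A}} Z_a \Big) = \E \max_{a \in \mathcal{A}} e^{\lambda Z_a} \leq \sum_{a \in \mathcal{A}} \E e^{\lambda Z_a} \leq \operatorname{card}(\mathcal{A}) \exp\!\Big( \tfrac{\lambda^2 R^2}{2}\Big),
\end{equation*}
where the equality uses that $\mathcal{A}$ is finite so the supremum is a maximum. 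Taking logarithms and dividing by $\lambda$ produces
\begin{equation*}
    \E \sup_{a \in \mathcal{A}} Z_a \leq \frac{\log \operatorname{card}(\mathcal{A})}{\lambda} + \frac{\lambda R^2}{2}.
\end{equation*}

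Finally, I would optimize the right-hand side over $\lambda > 0$. Choosing $\lambda = \sqrt{2 \log \operatorname{card}(\mathcal{A})} / R$ (assuming $R > 0$ and $\operatorname{card}(\mathcal{A}) \geq 2$, the degenerate cases being trivial) balances the two terms and gives the claimed bound $R \sqrt{2 \log \operatorname{card}(\mathcal{A})}$. There is no genuine obstacle here: this is a textbook result, and the only subtlety worth flagging is the Hoeffding MGF bound for bounded symmetric variables, which is standard. Everything else is algebraic optimization.
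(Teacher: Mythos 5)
Your proof is correct and is the standard finite-class sub-Gaussian maximal inequality argument (Hoeffding's lemma for the MGF, Jensen plus a union bound, then optimizing over $\lambda$), which is essentially the same derivation as in the cited source; the paper itself states this lemma as an imported result from \cite{massart2000some} and gives no proof of its own. The degenerate cases you flag ($R=0$ or $\operatorname{card}(\mathcal{A})=1$) are indeed trivial, so nothing is missing.
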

With the assistance of these powerful tools, we can establish the stochastic Lipschitz condition as follows, which is one of the most important points in this article for establishing the oracle inequality of the $\ell_2$ distance between the estimated value $\widehat{\theta}$ and the real value $\theta^*$.
\vspace{2ex}

\textbf{The proof of Corollary \ref{thm1}.}
\begin{proof}
    Denote $c_i = Z_i \theta^*$. For $\theta \in \Theta$, denote $t_i = Z_i(\theta - \theta^*) = Z_i \theta - c_i$. We also define the map $\bar{\pi}_j : (x_1, \cdots, x_p)^{\top} \mapsto (x_1, \cdots, x_j, 0, \cdots, 0)^{\top}$ and the function
    \begin{equation*}
        \varphi_{ij} (s) = \left\{ \begin{array}{ll}
            \displaystyle\frac{\gamma(c_i + \bar{\pi}_j s, Y_i) - \gamma(c_i + \bar{\pi}_{j - 1} s, Y_i)}{s_j} - \partial_j \gamma(c_i, Y_i), & \text{ if } s_j \neq 0; \\
            \partial_j \gamma(c_i + \bar{\pi}_{j - 1} s, Y_i) - \partial_j \gamma(c_i, Y_i), & \text{ if } s_j = 0.
        \end{array} \right.
    \end{equation*}
    Thus, $\varphi_{ij} : \mathbb{R}^2 \rightarrow \mathbb{R}$ is a real value function for $i = 1, \cdots, n, \, j = 1, 2$. Then it is easy to check that
    \begin{equation*}
        \gamma(Z_i \theta, Y_i) - \gamma(Z_i\theta^*, Y_i) = \sum_{j = 1}^2 \big( \partial_j \gamma(c_i, Y_i) + \varphi_{ij} (t_i) \big) t_{ij},
    \end{equation*}
    and $n \pn \big( \gamma(\theta) - \gamma(\theta^*) \big) = \sum_{i = 1}^n \sum_{j = 1}^2 \big( \partial_j \gamma(c_i, Y_i) + \varphi_{ij} (t_i) \big) X_i^{\top} \big( \theta^{(j)} - \theta^{ * (j)} \big)$
    in turn. It gives
    \begin{equation*}
        \begin{aligned}
            \sqrt{n} \gn \big( \gamma(\theta) - \gamma(\theta^*)\big) = & \sum_{i = 1}^n \sum_{j = 1}^2 \big( \partial_j \gamma(c_i, Y_i) - \E \partial_j \gamma(c_i, Y_i) \big) X_i^{\top} \big( \theta^{(j)} - \theta^{ * (j)} \big) \\
            & + \sum_{i = 1}^n \sum_{j = 1}^2 \big( \varphi_{ij}(t_i) - \E \varphi_{ij}(t_i)  \big) X_i^{\top} \big( \theta^{(j)} - \theta^{ * (j)} \big).
        \end{aligned}
    \end{equation*}

    First, we would like to give the explicit formula for $\varphi_{i1}$ and obtain an upper bound as well as a Lipschitz parameter for $\varphi_{i2}$. Denote $h_i (\cdot) = \gamma(\cdot, Y_i)$, then
    \begin{equation*}
        \varphi_{ij} (s) = \int_0^1 \big( \partial_j h_i (c_i + \bar{\pi}_{j - 1} s + s_j u e_j) - \partial_j h_i(c_i)\big) \, du,
    \end{equation*}
    where $e_j$ is the $j$-th basis vector of $\mathbb{R}^2$. Hence, for $j = 1$,
    \begin{equation*}
        \begin{aligned}
            \varphi_{i1}(s) & = - Y_i \int_0^1 \left[ \frac{e^{s_2}}{e^{s_1 + u} + e^{s_2}} - \frac{e^{s_2}}{e^{s_1} + e^{s_2}}\right] \, du + \int_0^1 \left[ \frac{e^{s_1 + s_2 + u}}{e^{s_1 + u} + e^{s_2}} - \frac{e^{s_1 + s_2}}{e^{s_1} + e^{s_2}}\right] \, du \\
            & = \left[ \log \frac{e^{s_1 + 1} + e^{s_2}}{e^{s_1} + e^{s_2}} - \frac{e^{s_1}}{e^{s_1} + e^{s_2}}\right] Y_i + C_1 (s),
        \end{aligned}
    \end{equation*}
    in which $C_1(s)$ is a function only related to $s$ and free of $Y$ and the index $i$. Using Lemma \ref{lem1}, for $j = 2$, write $F_3 = F_1 + M_{y, n}$,
    \begin{equation*}
        |\varphi_{i2} (s)| \leq \int_0^1 \big| \partial_2 h_i (c_i + \bar{\pi}_1 s + s_2 u e_2) - \partial_2 h_i(u)\big| \, du \leq 2 F_3,
    \end{equation*}
    and
    \begin{equation*}
        \begin{aligned}
            \big| \varphi_{i2} (s)  - \varphi_{i2} (t)\big| & \leq \int_0^1 \big| \partial_2 h_i (c_i + \bar{\pi}_1 s + s_2 u e_2) - \partial_2 h_i (c_i + \bar{\pi}_1 t + t_2 u e_2) \big| \, du \\
            & \leq \int_0^1 F_2 \| \bar{\pi}_1 (s - t) + (s_2 - t_2) u e_2\|_{\infty} \, du \leq F_2 \| s - t \|_{\infty}.
        \end{aligned}
    \end{equation*}
    This implies $\varphi_{i 2}$ is $(F_2, \ell_{\infty})$ Lipschitz. In particular, letting $s = Z_i (\theta - \theta^*)$ and $t = 0$,
    \begin{equation*}
        \left| \varphi_{i 2} \big(Z_i (\theta - \theta^*)\big)\right| \leq \| Z_i (\theta - \theta^*)\|_{\infty} \leq F_2 M_x D_{\Theta}.
    \end{equation*}
    Hence, we obtain an upper bound for $\varphi_{i 2}$ that
    \begin{equation}\label{varphi.i2.bound}
        \left| \varphi_{i 2} \big(Z_i (\theta - \theta^*)\big)\right| \leq 2 F_3 \vee F_2 M_x D_{\Theta} := M_1
    \end{equation}

    Now, for $k = 1, \cdots, p$, define
    \begin{equation*}
        \xi_{ik} (\theta) := \big( \varphi_{i2}(t_i) - \E \varphi_{i2}(t_i)  \big) X_{i k}, \qquad S_k = \sup_{\theta \in \Theta} \left| \sum_{i = 1}^n \xi_{ik} (\theta) \right|.
    \end{equation*}
    Then we can approach the final conclusion in the theorem by
    \begin{equation}\label{thm1.approach}
        \begin{aligned}
            \sup_{\theta \in \Theta / \{\theta^*\}} \left| \frac{\sqrt{n} \gn \big( \gamma (\theta) - \gamma (\theta^*)\big)}{\|\theta - \theta^*\|_1} \right| & \leq \max_{1 \leq k \leq p} \left| \sum_{i = 1}^n \big( \partial_1 \gamma(c_i, Y_i) - \E \partial_1 \gamma (c_i, Y_i) \big) X_{i k} \right| \\
            & \qquad + \sup_{\theta \in \Theta / \{\theta^*\}} \max_{1 \leq k \leq p} \left| \sum_{i = 1}^n \big( \varphi_{i1}(t_i) - \E \varphi_{i1}(t_i) \big) X_{i k} \right| \\
            & \qquad + \max_{1 \leq k \leq p} \left| \sum_{i = 1}^n \frac{e^{c_{i2}}}{e^{c_{i1}} + e^{c_{i2}}} \big( Y_i - \E Y_i \big) X_{ik}\right| \\
            & \qquad + \max_{1 \leq k \leq p} \left| \sum_{i = 1}^n \big( \nu(c_i, Y_i) - \E \nu (c_i, Y_i) \big) X_{i k} \right|  +  \sup_{\theta \in \Theta / \{\theta^*\}} \max_{1 \leq k \leq p} S_k.
        \end{aligned}
    \end{equation}
    We will tickle with (\ref{thm1.approach}) term by term.

    \vspace{2ex}
    \emph{(i). the first three term in (\ref{thm1.approach}):}
    \vspace{2ex}

    We will use concentration inequality to deal with these terms. For any $1 \leq k \leq p$ and $t \geq 0$, by Lemma \ref{lem2} and Cauchy-Schwartz inequality,
    \begin{equation*}
        \begin{aligned}
            & \pr \bigg( \Big| \sum_{i = 1}^n \big( \partial_1 \gamma(c_i, Y_i) - \E \partial_1 \gamma (c_i, Y_i) \big) X_{i k} \Big| \geq t \bigg) = \pr \bigg( \Big| \frac{e^{c_{i2}}}{e^{c_{i1}} + e^{c_{i2}}} X_{ik} \big( Y_i - \E Y_i\big) \Big| \geq t \bigg) \\
            & \leq 2 \exp \left\{-\frac{1}{4}\left(\frac{t^{2}}{2 \sum_{i=1}^{n} (w_i^{(1)})^2 X_{ik}^2 a_i^2} \wedge \frac{t}{\displaystyle\max _{1 \leq i \leq n}|w_i^{(1)} X_{ik}|a_i }\right)\right\} \\
            & \leq 2 \exp \left\{-\frac{1}{4}\left(\frac{t^{2}}{2  \sqrt{\sum_{i=1}^{n} (w_i^{(1)})^4 a_i^4} \displaystyle\max _{1 \leq k \leq p}\sqrt{\begin{matrix}\sum_{i=1}^{n} X_{ik}^4 \end{matrix}}}\wedge \frac{t}{M_x \displaystyle\max _{1 \leq i \leq n} |w_i^{(1)}| a_i  }\right)\right\},
        \end{aligned}
    \end{equation*}
    where $w_i^{(1)} = e^{c_{i2}} / (e^{c_{i1}} + e^{c_{i2}})$ and $a_i = a(\mu_i, k_i)$ is defined in Lemma \ref{lem2}, they are both determined and free of $\theta$ and the index $k$. Hence,
    \begin{equation*}
        \begin{aligned}
            & \pr \bigg( \max_{1 \leq k \leq p} \Big| \sum_{i = 1}^n \big( \partial_1 \gamma(c_i, Y_i) - \E \partial_1 \gamma (c_i, Y_i) \big) X_{i k} \Big| \geq t \bigg) \\
            & \qquad \qquad \leq 2 p \exp \left\{-\frac{1}{4}\left(\frac{t^{2}}{ 2 \sqrt{\sum_{i=1}^{n} (w_i^{(1)})^4 a_i^4} \displaystyle\max _{1 \leq k \leq p}\sqrt{\begin{matrix}\sum_{i=1}^{n} X_{ik}^4 \end{matrix}}}\wedge \frac{t}{M_x \displaystyle\max _{1 \leq i \leq n} |w_i^{(1)}| a_i  }\right)\right\}.
        \end{aligned}
    \end{equation*}
    By letting the right side of the above display be $q_1 \in (0, 1)$, we can get
    \begin{equation*}
        \begin{aligned}
            & \pr \bigg( \max_{1 \leq k \leq p} \Big| \sum_{i = 1}^n \big( \partial_1 \gamma(c_i, Y_i) - \E \partial_1 \gamma (c_i, Y_i) \big) X_{i k} \Big|  \\
            & \quad \geq 2 \sqrt{2 \big( \sum_{i=1}^{n} (w_i^{(1)})^4 a_i^4 \big)^{1/2} \big( \max_{1 \leq k \leq p} \sum_{i=1}^{n} X_{ik}^4 \big)^{1/2} \log (2 p / q_1)} \vee 4 M_x \max_{1 \leq i \leq n} |w_i^{(1)}| a_i \log (2p / q_1)\Bigg) \leq q_1.
        \end{aligned}
    \end{equation*}
    Exactly the same, we can get for any $q_3 \in (0, 1)$, regarding to the third term,
    \begin{equation*}
        \begin{aligned}
            & \pr \Bigg( \max_{1 \leq k \leq p} \Big| \sum_{i = 1}^n \frac{e^{c_{i2}}}{e^{c_{i1}} + e^{c_{i2}}}\big( Y_i - \E Y_i \big) X_{i k} \Big|  \\
            & \quad \geq 2 \sqrt{2 \big( \sum_{i=1}^{n} (w_i^{(1)})^4 a_i^4 \big)^{1/2} \big( \max_{1 \leq k \leq p} \sum_{i=1}^{n} X_{ik}^4 \big)^{1/2} \log (2 p / q_3)} \vee 4 M_x \max_{1 \leq i \leq n} |w_i^{(1)}| a_i \log (2p / q_3)\Bigg) \leq q_3.
        \end{aligned}
    \end{equation*}

    The situation is slightly different for the second term. Indeed,
    \begin{equation*}
        \begin{aligned}
            \pr  \left( \Big| \sum_{i = 1}^n \big( \varphi_{i1}(t_i) - \E \varphi_{i1} (t_i)\big) X_{ik} \Big| \geq t \right)  & = \pr \left( \left| \sum_{i = 1}^n \left[ \log \frac{e^{t_{i1} + 1} + e^{t_{i2}}}{e^{t_{i1}} + e^{t_{i2}}} - \frac{e^{t_{i1}}}{e^{t_{i1}} + e^{t_{i2}}}\right] X_{ik} (Y_i - \E Y_i) \right| \geq t\right) \\
            & := \pr \left( \Big| \sum_{i = 1}^n w_i^{(2)}(\theta) X_{ik} \big( Y_i - \E Y_i \big)\Big| \geq t \right)
        \end{aligned}
    \end{equation*}
    Since $t_i$ is a function of $\theta$, so as the weights $w_i^{(2)}(\theta)$, we cannot use the exact same method as previous. However, since $\Theta$ is convex, we have $\{t_i\}_{i = 1}^n \subseteq \mathcal{S}$. Then it only needs to note that,
    \begin{equation*}
        \big|w_i^{(2)}(\theta)\big| = \left|\log \frac{e^{t_{i1} + 1} + e^{t_{i2}}}{e^{t_{i1}} + e^{t_{i2}}} - \frac{e^{t_{i1}}}{e^{t_{i1}} + e^{t_{i2}}}\right| \leq \log \frac{e + e^{M_{s, n} - m_{s, n}}}{1 + e^{m_{s, n} - M_{s, n}}} + \frac{1}{1 + e^{m_{s, n} - M_{s, n}}} := w^{(2)},
    \end{equation*}
    which gives
    \begin{equation*}
        \begin{aligned}
            & \pr \Bigg( \max_{1 \leq k \leq p} \Big| \sum_{i = 1}^n \big( \varphi_{i1}(t_i) - \E \varphi_{i1} (t_i)\big) X_{ik} \Big| \\
            & \geq 2 \sqrt{2 \sqrt{n} w^{(2)2}\big( \sum_{i=1}^{n} a_i^4 \big)^{1/2} \big( \max_{1 \leq k \leq p} \sum_{i=1}^{n} X_{ik}^4 \big)^{1/2} \log (2 p / q_2)} \vee 4 M_x w^{(2)}\max_{1 \leq i \leq n} | a_i| \log (2p / q_2)\Bigg) \leq q_2.
        \end{aligned}
    \end{equation*}
    for any $\theta \in \Theta$ and $q_2 \in (0, 1)$.

    \vspace{2ex}
    \emph{(ii). the forth term in (\ref{thm1.approach}):}
    \vspace{2ex}

    From Lemma \ref{lem1}, we know that $|\nu (c_i, Y_i)| \leq F_1$. Thus, simply by Hoeffding inequality (see Corollary 2.1 (b) in \cite{zhang2020concentration}), for any $t \geq 0$ and $1 \leq k \leq p$,
    \begin{equation*}
        \begin{aligned}
            \pr \bigg( \Big| \sum_{i = 1}^n \big( \nu(c_i, Y_i) - \E \nu (c_i, Y_i) \big) X_{i k} \Big| \geq t \bigg) & \leq 2 \exp \left\{ - \frac{t^2}{2F_1^2 \sum_{i = 1}^n X_{ik}^2}\right\} \leq 2 \exp \left\{ - \frac{t^2}{2F_1^2 \displaystyle\max_{1 \leq k \leq p} \begin{matrix} \sum_{i = 1}^n X_{ik}^2 \end{matrix} } \right\}.
        \end{aligned}
    \end{equation*}
    For arbitrary $q_4 \in (0, 1)$, let $t = F_1 \sqrt{2 \log (2 p / q_4) \max_{1 \leq k \leq p} \sum_{i = 1}^n X_{ik}^2}$, we obtain
    \begin{equation*}
        \pr \bigg( \max_{1 \leq k \leq p} \Big| \sum_{i = 1}^n \big( \nu(c_i, Y_i) - \E \nu(c_i, Y_i) \big) X_{i k} \Big| \geq F_1 \sqrt{2 \log (2 p / q_4) \max_{1 \leq k \leq p} \sum_{i = 1}^n X_{ik}^2} \bigg) \leq q_4.
    \end{equation*}

    \emph{(iii). the last term in (\ref{thm1.approach}):}
    \vspace{2ex}

    For any $i = 1, \cdots, n$ and $k = 1, \cdots, p$, by (\ref{varphi.i2.bound}), $|\xi_{ik} (\theta)| \leq 2 M_1 M_x := M_0$. And for any $\theta \in \Theta$, (\ref{varphi.i2.bound}) also implies
    \begin{equation*}
        \sum_{i = 1}^n \var \big( \xi_{ik} (\theta) \big) = \sum_{i = 1}^n \E \big( \varphi_{i2} (t_i) X_{ik} \big)^2 \leq A_1^2 \sum_{i = 1}^n X_{ik}^2 \leq A_1^2 \max_{1 \leq k \leq p} \sum_{i = 1}^n X_{ik}^2 := S_0^2
    \end{equation*}
    Therefore, from Lemma \ref{ex.lem1}, it follows that
    \begin{equation}\label{E.Sk}
        \pr \left( S_k \geq 2 \E S_k + S_0 \sqrt{2 t} + 4 M_0 t\right) \leq e^{-t}.
    \end{equation}
    So the last task is giving an upper bound for $\E S_k$. Note that $\E \xi_{ik} (\theta) = 0$, by symmetrization,
    \begin{equation*}
        \begin{aligned}
            \E S_ k  = \E \sup_{\theta \in \Theta} \bigg| \sum_{i = 1}^n \big( \varphi_{i2} (t_i) - \E \varphi_{i2} (t_i) \big) X_{ik}\bigg|  \leq 2 \E \sup_{\theta \in \Theta} \bigg| \sum_{i = 1}^n \varepsilon_i \varphi_{i2} (t_i)  X_{ik}\bigg| = 2 \E \sup_{t \in \mathcal{T}} \bigg| \sum_{i = 1}^n \varepsilon_i \varphi_{i2} (t_i)  X_{ik}\bigg|,
        \end{aligned}
    \end{equation*}
    where $\mathcal{T} = \{ t_i = Z_i(\theta - \theta^*): \theta \in \Theta, i = 1, \cdots, n\}$, and $\varepsilon_1, \cdots, \varepsilon_n$ are i.i.d. Rademacher variables independent of $Y_1, \cdots, Y_n$. Here using the fact $\varphi_{i2} (\cdot) X_{ik}$ is $(M_x F_2, \ell_{\infty})$-Lipschitz and Lemma \ref{ex.lem2}, \ref{ex.lem3}, and \ref{ex.lem4},
    \begin{equation*}
        \begin{aligned}
             \E \sup_{t \in \mathcal{T}} \bigg| \sum_{i = 1}^n \varepsilon_i \varphi_{i2} (t_i)  X_{ik}\bigg| & \leq 8 M_x F_2 \sum_{j = 1}^2 \E \sup_{t \in \mathcal{T}} \bigg| \varepsilon_i t_{ij}\bigg| = 8 M_x F_2 \sum_{j = 1}^2 \E \sup_{\theta \in \Theta} \bigg| \sum_{i = 1}^n \varepsilon_i X_i^{\top} \big(\theta^{(j)} - \theta^{* (j)} \big) \bigg| \\
             & \leq 16 M_x F_2 D_{\Theta} \E \max_{1 \leq k \leq p} \bigg| \sum_{i = 1}^n \varepsilon_i X_{ik}\bigg|\leq 16 \sqrt{2 \log p } M_x F_2 D_{\Theta} \sqrt{\displaystyle\max_{1 \leq k \leq p} \sum_{i = 1}^n X_{ik}^2}.
        \end{aligned}
    \end{equation*}
    Then by (\ref{E.Sk}),
    \begin{equation*}
        \pr \Bigg( S_k \geq 32 \sqrt{2 \log p } M_x F_2 D_{\Theta} \sqrt{\max_{1 \leq k \leq p} \sum_{i = 1}^n X_{ik}^2} + M_1 \sqrt{ 2t \max_{1 \leq k \leq p} \sum_{i = 1}^n X_{ik}^2 } + 8 M_1 M_x t\Bigg) \leq e^{-t}.
    \end{equation*}
    Note that the right side of the inequality is free of $\theta$, let $t = \log (p / q_5)$ in the above inequality, and use the same technique as previous, we get the uniform bound for it. And the Theorem is proved by letting $q_2 = q_3 = q_1$, $q_4 = q_2$, $q_5 = q_3$, and $|w_i^{(1)}| \leq w^{(1)}$.
\end{proof}
\vspace{2ex}

\textbf{The lower bound of the likelihood-based divergence}

Recall the standard steps for establishing oracle inequality for lasso estimator is (see \cite{xiao2020oracle} for example):
\begin{itemize}
    \item[I.] To avoid the ill behavior of Hessian, propose the restricted eigenvalue condition or other analogous conditions about the design matrix.

    \item[II.] Find the tuning parameter based on the high-probability event, i.e., the KKT conditions.

    \item[III.] According to some restricted eigenvalue assumptions and tuning parameter selection, derive the oracle inequalities via the definition of the Lasso optimality and the minimizer under unknown expected risk function and some basic inequalities. There are three sub-steps:

    \begin{itemize}
        \item[(i)]  Under the KKT conditions, show that the error vector $\widehat{\theta} - \theta^*$ is in some restricted set with structure sparsity, and check that $\widehat{\theta} - \theta^*$ is in a big compact set;

        \item[(ii)] Show that likelihood-based divergence of $\widehat{\theta}$ and $\theta^*$ can be lower bounded by some quadratic distance between $\widehat{\theta}$ and $\theta^*$;

        \item[(iii)] By some elementary inequalities and (ii), show that $\| \widehat{\theta} - \theta^* \|_1$ is in a smaller compact set with radius of optimal rate (proportional to $\lambda$).
    \end{itemize}
\end{itemize}

Under our approach, the KKT condition with a high probability is replaced by the stochastic Lipschitz condition, while other steps should remain the same. For most models belonging the canonical exponential family, the step III.(ii) is quite trivial, see Lemma 1 in \cite{abramovich2016model} for example. Nonetheless, it is worthy to note that the our loss function is not in the canonical exponential family, so there is no extended discussion about the lower bound of likelihood-based divergence of $\widehat{\theta}$ and $\theta^*$ in our setting. We will use the following theorem to clarify this thing.

\begin{theorem}\label{thm2}
    Suppose the condition is the same as that in Theorem \ref{thm1}. Denote the true parameter for $Y_i$ is $\mu^*$ and $k^*$. If $\{Z_i \theta\}_{i = 1, \cdots, n, \theta \in \Theta} \subseteq \mathcal{S} \cap \{ s \in \mathbb{R}^2 : 2 s_1 + (1 + s_2 (1 - k^*) k^{* \mu^*})\mu^* \leq \frac{s_1 + \mu^*}{2 s_2^2}\}$ and $\mu^* \geq 1$, then
    \begin{equation*}
        \E \gamma (Z_i \theta, Y_i) - \E \gamma (Z_i \theta^*, Y_i) \geq C_{\gamma} \|Z_i(\theta  - \theta^*) \|_2^2,
    \end{equation*}
    where $C_{\gamma}$ is a positive constant and its exact definition is in the proof.
\end{theorem}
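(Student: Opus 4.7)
The plan is to reduce the bound to a strong convexity statement on the two-dimensional map $E(s):=\mathrm{E}\gamma(s,Y_i)$ and then lower bound the Hessian $\nabla_s^2 E(s)$ uniformly over the restricted set described in the hypothesis. Since the distribution of $Y_i$ is determined by $\theta^*$ only, $E$ is a deterministic function of $s=Z_i\theta\in\mathbb{R}^2$. The first-order optimality $\nabla_s E(s^*)=0$ with $s^*=Z_i\theta^*$ is a consequence of the identity $P\partial_j\ell(\theta^*)=0$ noted in Section 2.1 (applied to each $i$). Therefore, by Taylor's formula with integral remainder,
\begin{equation*}
E(s)-E(s^*)=\int_0^1(1-u)\,(s-s^*)^\top\nabla_s^2 E\bigl(s^*+u(s-s^*)\bigr)(s-s^*)\,du,
\end{equation*}
so it suffices to establish a uniform lower bound $\nabla_s^2 E(\tilde s)\succeq 2C_\gamma I_2$ for all $\tilde s$ on the segment joining $s^*$ and $s$, which by convexity of the restricted set lies entirely inside $\mathcal{S}\cap\{s:2s_1+(1+s_2(1-k^*)k^{*\mu^*})\mu^*\leq(s_1+\mu^*)/(2s_2^2)\}$.

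The next step is to compute the four entries of $\nabla_s^2 E$ explicitly. Differentiating the formulas for $\partial_1\gamma(s,y)$ and $\partial_2\gamma(s,y)$ given in Lemma \ref{lem1} produces closed-form expressions for $\partial_{11}^2\gamma$, $\partial_{12}^2\gamma$ and $\partial_{22}^2\gamma$ in terms of $s_1,s_2,y$ and the digamma/trigamma values $\psi(e^{s_2})$, $\psi(y+e^{s_2})$, $\psi'(e^{s_2})$, $\psi'(y+e^{s_2})$. Taking expectations with respect to $Y_i\sim\mathrm{NB}(\mu^*,k^*)$ — using $\mathrm{E}Y_i=\mu^*$ and $\mathrm{E}Y_i^2=\mu^*+\mu^{*2}(1+1/k^*)$, plus standard series identities for $\mathrm{E}\psi(Y_i+e^{s_2})$ — gives an explicit $2\times 2$ matrix $H(s):=\nabla_s^2 E(s)$ whose entries are elementary functions of $(s_1,s_2,\mu^*,k^*)$.

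Then I would verify $H(s)\succeq 2C_\gamma I_2$ by checking that the diagonal entries $H_{11},H_{22}$ are bounded below by $2C_\gamma$ and the determinant satisfies $\det H\geq H_{11}H_{22}-H_{12}^2\geq (2C_\gamma)^2$. The diagonal bounds follow directly from the boundedness of $s$ in $\mathcal{S}$ (every term is a smooth function of $e^{s_j}\in[e^{m_{s,n}},e^{M_{s,n}}]$); the determinant bound is where the inequality $2s_1+(1+s_2(1-k^*)k^{*\mu^*})\mu^*\leq(s_1+\mu^*)/(2s_2^2)$ enters. The algebraic rearrangement should show that this condition is precisely what is needed to keep the off-diagonal contribution $H_{12}^2$ strictly below $H_{11}H_{22}$. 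The explicit constant $C_\gamma$ is then the infimum of $\tfrac12\lambda_{\min}(H(s))$ over the compact restricted set, which is strictly positive by continuity.

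The main obstacle will be the second paragraph: cleanly handling the polygamma terms $\psi(e^{s_2})-\psi(Y_i+e^{s_2})$ and $\psi'(e^{s_2})-\psi'(Y_i+e^{s_2})$ after taking expectations, and then tracing through the algebra to identify the stated condition as the exact threshold for $\det H\ge 0$. Once the Hessian is in hand with a clean expression, the convexity and uniform eigenvalue bound are routine, but connecting the awkward-looking constraint in the hypothesis to the determinant condition is the genuine content of the argument.
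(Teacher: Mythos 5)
Your plan is essentially the paper's own argument: the paper writes the expected-loss difference as a Kullback--Leibler divergence $D_{\operatorname{KL}}(s,c)$, Taylor-expands about $s^*=Z_i\theta^*$ where the gradient vanishes, computes the $2\times 2$ Hessian explicitly (including the digamma/trigamma terms), verifies strict positive definiteness via the trace/determinant criterion for $2\times 2$ matrices, and takes $C_\gamma$ to be half the infimum of the smallest eigenvalue along the segment. The only minor discrepancy is where the restricted-set condition enters: in the paper it is used (together with $\E(1/Y)=(1-k^*)k^{*\mu^*}\mu^*$ and Jensen/mean-value bounds on the polygamma differences) to lower-bound the trace contribution $g_2(k)=\psi'(k)-\E\,\psi'(Y+k)$, rather than to control the off-diagonal entry in the determinant as you anticipate, the determinant being handled "similarly" in the paper.
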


\begin{proof}
    For simplicity, we drop the index $i$. By the definition and the notation in Theorem \ref{thm1},
    \begin{equation*}
        \begin{aligned}
            \E \gamma (Z \theta, Y) - \E \gamma (Z \theta^*, Y) & = D_{\operatorname{KL}} (s, c),
        \end{aligned}
    \end{equation*}
    where $D_{\operatorname{KL}}$ is the Kullback-Leibler divergence from the $Y_i$'s density $f (y \, | \, Z \theta)$ to $f (y \, | \, Z \theta^*)$, i.e.
    \begin{equation*}
        D_{\operatorname{KL}}(s, c) := \int f(y \, | \, c) \log \frac{f (y \, | \, c)}{f (y \, | \, s)} \, dy.
    \end{equation*}
    Due to the identification of the negative binomial distribution, we have $D_{\operatorname{KL}}(s, c) \geq 0$ with equality if and only if $s = c$. Using the Taylor theorem,
    \begin{equation*}
        \begin{aligned}
            D_{\operatorname{KL}}(s, c) & = D_{\operatorname{KL}}(c, c) + \left. \frac{\partial}{\partial s} D_{\operatorname{KL}}(s, c) \right|_{s = c} + \frac{1}{2} (s - c)^{\top} \left[ \frac{\partial^2}{\partial s \partial s^{\top}} D_{\operatorname{KL}}(s, c) \right]_{s = c + \rho (s - c)} (s - c) \\
            & = \frac{1}{2} (s - c)^{\top} \left[ \frac{\partial^2}{\partial s \partial s^{\top}} D_{\operatorname{KL}}(s, c) \right]_{s = c + \rho (s - c)} (s - c) \\
            & \geq \frac{1}{2} \inf_{\rho \in [0, 1]} \lambda_{min} \left[ \frac{\partial^2}{\partial s \partial s^{\top}} D_{\operatorname{KL}}(s, c) \right]_{s = c + \rho (s - c)} \|s - c\|_2^2
        \end{aligned}
    \end{equation*}
    where $\rho \in [0, 1]$ and $\lambda_{min} (\mathrm{M})$ is the smallest eigenvalue of the matrix $\mathrm{M}$. So it is enough to show that $\left[ \frac{\partial^2}{\partial s \partial s^{\top}} D_{\operatorname{KL}}(s, c) \right]_{s = c + \rho (s - c)}$ is strictly positive define for any $\rho \in [0, 1]$. First, calculate directly,
    \begin{equation*}
         \begin{aligned}
             \frac{\partial^2}{\partial s \partial s^{\top}} D_{\operatorname{KL}}(s, c) & = \int f(y \, | \, c) \left[ \frac{\partial^2}{\partial s \partial s^{\top}} \gamma(s, y) \right] \, dy \\
             & = \int f(y \, | \, c) \left[ \begin{array}{cc}
                \displaystyle\frac{e^{s_1 + s_2}}{(e^{s_1} + e^{s_2})^2} (e^{s_2} + y) & \displaystyle\frac{e^{s_1 + s_2}}{(e^{s_1} + e^{s_2})^2} (e^{s_1} - y)  \\
                 \displaystyle\frac{e^{s_1 + s_2}}{(e^{s_1} + e^{s_2})^2} (e^{s_1} - y) & \displaystyle \partial_2 \nu(s, y) + \frac{e^{s_1 + s_2}}{(e^{s_1} + e^{s_2})^2} y
             \end{array} \right] \, dy \\
             & =: \left[ \begin{array}{cc}
                a_{11} + b  & a_{12} - b \\
                a_{21} - b & a_{22} + b
             \end{array} \right],
         \end{aligned}
    \end{equation*}
    where $a_{11} = \frac{e^{s_1 + 2s_2}}{(e^{s_1} + e^{s_2})^2}, \qquad a_{12} = \frac{e^{2s_1 + s_2}}{(e^{s_1} + e^{s_2})^2}, \qquad b = \frac{e^{s_1 + 2s_2}}{(e^{s_1} + e^{s_2})^2} \E Y,$
    and
    \begin{equation*}
        \begin{aligned}
            a_{22} = \E \partial_2 v(s, Y) & = e^{s_2} \left[ \psi (e^{s_2}) + e^{s_2} \psi^{\prime} (e^{s_2}) + \log (1 + e^{s_1 - s_2}) - \frac{e^{s_1}}{e^{s_1} + e^{s_2}} - \Big( \frac{e^{s_1}}{e^{s_1} + e^{s_2}} \Big)^2 \right] \\
            & \qquad \quad - e^{s_2} \left[ \E \psi (Y + e^{s_2}) + e^{s_2} \E \psi^{\prime} (Y + e^{s_2}) \right].
        \end{aligned}
    \end{equation*}
    For a $2 \times 2$ matrix $\mathrm{M}$, it is strictly positive define if and only if $\operatorname{tr} (\mathrm{M}) > 0$ and $\operatorname{det} (\mathrm{M}) > 0$. Denote $\mu = e^{s_1}, k = e^{s_2}$, and $\mu^* = e^{c_1} , k^* = e^{c_2}$ are true parameters for $Y$. Then
    \begin{equation}\label{trace}
        \begin{aligned}
            \operatorname{tr} \left[ \frac{\partial^2}{\partial s \partial s^{\top}} D_{\operatorname{KL}}(s, c) \right] & = \frac{\mu k^2}{(\mu + k)^2} + 2 \frac{\mu k^2}{(\mu + k)^2} \mu^* - k \left[ \frac{\mu}{\mu + k} + \Big( \frac{\mu}{\mu + k} \Big)^2 \right] \\
            & \qquad + k \left[ \log (1 + \mu / k) + \big(\psi(k) - \E \psi(Y + k) \big) + k \big( \psi^{\prime}(k) - \E \psi^{\prime}(Y + k)\big) \right] \\
            & = \frac{2(\mu^* - 1)\mu k^2}{(\mu + k)^2} + k \big[ \log(1 + \mu / k) + g_1(k) + k g_2(k) \big] \\
            & \geq k \big[ \log(1 + \mu / k) + g_1(k) + k g_2(k) \big].
        \end{aligned}
    \end{equation}

    Now, we are going to deal with $g_1(k) = \psi(k) - \E \psi(Y + k)$ and $g_2(k) = \psi^{\prime}(k) - \E \psi^{\prime}(Y + k)$. For $\psi(x)$,
    \begin{equation*}
        0 > \psi^{\prime \prime} (x) = - \frac{1}{x^2} - \int_0^{\infty} t^2 \varphi(t) e^{- tx} \, dt \geq - \frac{1}{x^2} - \frac{2}{x^3}.
    \end{equation*}
    Therefore, $\psi(\cdot)$ is concave. Using Jensen inequality and median value theorem
    \begin{equation*}
        \begin{aligned}
            g_1(k) = \psi(k) - \E \psi (Y + k) \geq \psi(k) - \psi \big( \E Y + k \big) \geq - \left( \frac{1}{k} + \frac{1}{k^2} \right) \E Y = - \mu^* \left( \frac{1}{k} + \frac{1}{k^2} \right).
        \end{aligned}
    \end{equation*}
    Similarly, for $g_2(k)$, by using the fact that $\E (1 / Y) = (1 - k^*) k^{* \mu^*} \mu^*$ and the assumption,
    \begin{equation*}
        \begin{aligned}
            g_2(k) & = \E \left[ \psi^{\prime} (k) - \psi^{\prime} (Y + k) \right] \geq \E \left[ Y \left( \frac{1}{(\xi(Y) + k)^2} + \frac{2}{(\xi(Y) + k)^3}\right)\right] \\
            & \geq \E \left[ \frac{Y}{(Y + k)^2}\right] + 2 \E \left[ \frac{Y}{(Y + k)^3} \right] \geq \left[ \E \frac{(Y + k)^2}{Y}\right]^{-1} + 2 \left[ \E \frac{(Y + k)^3)}{Y}\right]^{-1} \\
            & = \frac{1}{ 2 k + (1 + k^2 (1 - k^*) k^{* \mu^*}) \mu^* } + \frac{2}{k^{*2} (k^* + \mu^*) / \mu^{* 2} + \mu^{* 2} + 3 k \mu^* + 3 k^2 + k^3 (1 - k^*) k^{* \mu^*} \mu^*} \\
            & \geq (\mu + \mu^*) \left( \frac{1}{2k^2} + \frac{1}{k^3}\right).
        \end{aligned}
    \end{equation*}
    where $\xi(Y)$ lies between $0$ and $Y$. The lower bounds for $g_1$ and $g_2$, together with the fact that $\log (1 + x) \geq x - x^2/2$ for $x \geq 0$, we conclude that $\operatorname{tr} \left[ \frac{\partial^2}{\partial s \partial s^{\top}} D_{\operatorname{KL}} (s, c )\right] > 0$. Similarly, we can also prove $\operatorname{det} \left[ \frac{\partial^2}{\partial s \partial s^{\top}} D_{\operatorname{KL}} (s, c )\right] > 0$, so the theorem holds.
\end{proof}

\textbf{The proof of Theorem \ref{thm3}}
\begin{proof}
    The proof follows the idea in \cite{bickel2009simultaneous}. First, by the definition of $\widehat{\theta}$,
    \begin{equation*}
        \begin{aligned}
            P \big( \gamma(\widehat{\theta}) - \gamma (\theta^*)\big) & \leq P \big( \gamma(\widehat{\theta}) - \gamma (\theta^*)\big) + \big( \pn \gamma(\theta^*) + \lambda \|\theta^* \|_{\omega, 1}\big) - \big( \pn \gamma(\widehat{\theta}) + \lambda \|\widehat{\theta} \|_{\omega, 1} \big) \\
            & \leq \frac{1}{\sqrt{n}} \gn \big( \gamma (\theta^*) - \gamma (\widehat{\theta})\big) + \lambda \big( \|\theta^*\|_{\omega, 1} - \|\widehat{\theta}\|_{\omega, 1}\big).
        \end{aligned}
    \end{equation*}
    From Theorem \ref{thm2}, we also have
    \begin{equation*}
        P \big( \gamma(\widehat{\theta}) - \gamma (\theta^*)\big) \geq \frac{C_{\gamma}}{n} \sum_{i = 1}^n \|Z_i (\widehat{\theta} - \theta^*) \|_2^2 = \frac{C_{\gamma}}{n} \sum_{j = 1}^2 \| \X (\widehat{\theta}^{(j)} - \theta^{* (j)})\|_2^2.
    \end{equation*}
    Then by Theorem \ref{thm1} and the definition of $\lambda$,
    \begin{equation*}
        \begin{aligned}
            C_{\gamma} \sum_{j = 1}^2 \| \X (\widehat{\theta}^{(j)} - \theta^{* (j)})\|_2^2 & \leq {\sqrt{n}} \gn \big( \gamma (\theta^*) - \gamma (\widehat{\theta})\big) + n \lambda \big( \|\theta^*\|_{\omega, 1} - \|\widehat{\theta}\|_{\omega, 1} \big) \\ 
            & \leq M_q \| \theta^* - \widehat{ \theta } \|_1 + (1 + 1 / a) M_q \big( \|\theta^*\|_{\omega, 1} - \|\widehat{\theta}\|_{\omega, 1}\big) \\
            & = M_q \sum_{j =  1}^2 \bigg[ \| \widehat{\theta}^{(j)} - \theta^{* (j)}\|_1 + (1 + 1/a) \omega_j \big( \|\theta^{* (j)}\|_1 - \|\widehat{\theta}^{(j)}\|_1\big) \bigg]
        \end{aligned}
    \end{equation*}
    holds with probability at least $1 - q$, where $a = (K - 1) / 2$. Now let $J_1, J_2 \subseteq \{1, \cdots, p\}$ be any sets with $J_j \supseteq \spt \big( \theta^{*(j)}\big)$. It is easy to check
    \begin{equation*}
        \begin{aligned}
            \| \widehat{\theta}^{(j)} - \theta^{* (j)}\|_1 & + (1 + 1/a) \omega_j \big( \|\theta^{* (j)}\|_1 - \|\widehat{\theta}^{(j)}\|_1\big) \\
            & = \| \widehat{\theta}_{J_j}^{(j)} - \theta^{*(j)}\|_1 + \|\widehat{\theta}_{J_j^c}^{(j)}\|_1 + (1 + 1/a) \omega_j \big( \|\theta^{*(j)}\|_1 - \|\widehat{\theta}_{J_j}^{(j)}\|_1 - \|\widehat{\theta}_{J_j^c}^{(j)}\|_1\big) \\
            & \leq (K / a) \| \widehat{\theta}_{J_j}^{(j)} - \theta^{* (j)}\|_1 - (1 / a) \| \widehat{\theta}_{J_j^c}^{(j)} \|_1.
        \end{aligned}
    \end{equation*}
    by the fact $\omega_j \in [0, 1]$. It gives that with probability at least $1 - q$,
    \begin{equation}\label{oracle1}
        \sum_{j = 1}^2 \| \X (\widehat{\theta}^{(j)} - \theta^{* (j)})\|_2^2 \leq \frac{M_q}{a C_{\gamma}} \sum_{j = 1}^2 \big( K \| \widehat{\theta}_{J_j}^{(j)} - \theta^{* (j)}\|_1 - \| \widehat{\theta}_{J_j^c}^{(j)} \|_1 \big).
    \end{equation}

    Let $A_1, A_2 \subseteq \{1, \cdots, p\}$ satisfying $\spt \big( \theta^{*(j)}\big) \subseteq A_j$ and $\operatorname{card} (A_j) = p_1$, and we also let $B_j$ be the union of $A_j$ and the indices of $p_1$ largest $\widehat{\theta}^{(j)}$. Then $A_j$ and $B_j$ also guarantee (\ref{oracle1}). And from Lemma \ref{ex.lem5}, they also give
    \begin{equation*}
        \| \widehat{\theta}^{(j)}_{B_j^c}\|_2^2 \leq p_1^{-1} \| \widehat{\theta}^{(j)}_{A_j^c}\|_1^2.
    \end{equation*}
    Besides, from the definition of $A_j$ and $B_j$, we know that $\| \widehat{\theta}^{(j)}_{A_j^c}\|_1 \geq \| \widehat{\theta}^{(j)}_{B_j^c}\|_1$ and $\|\widehat{\theta}^{(j)}_{A_j} - \theta^{*(j)}\|_1 \leq \|\widehat{\theta}^{(j)}_{B_j} - \theta^{*(j)}\|_1 $.

    Unlike the single lasso question, here we need to define $I := \{j = 1, 2 : K \| \widehat{\theta}_{A_j}^{(j)} - \theta^{* (j)}\|_1 \geq \| \widehat{\theta}_{A_j^c}^{(j)} \|_1 \}$, and consider $j \in I$ and $j \notin I$ separately. Obviously, $I \neq \emptyset$, or (\ref{oracle1}) cannot be holden. For $j \in I$, we have
    \begin{equation*}
        K \|\widehat{\theta}^{(j)}_{B_j} - \theta^{*(j)}\|_1 - \| \widehat{\theta}^{(j)}_{B_j^c}\|_1 \geq K \|\widehat{\theta}^{(j)}_{A_j} - \theta^{*(j)}\|_1 - \| \widehat{\theta}^{(j)}_{A_j^c}\|_1 \geq 0.
    \end{equation*}
    Then by the restricted eigenvalue condition,
    \begin{equation*}
        n \kappa^2 \| \widehat{\theta}_{J_j}^{(j)} - \theta^{*(j)}\|_2^2 \leq \| \X (\widehat{\theta}^{(j)} - \theta^{* (j)})\|_2^2
    \end{equation*}
    holds for $J_j = A_j$ or $J_j = B_j$. Note that from (\ref{oracle1}),
    \begin{equation*}
        \sum_{j \in I} \| \X (\widehat{\theta}^{(j)} - \theta^{*(j)} )\|_2^2 \leq \frac{M_q}{a C_{\gamma}} \sum_{j \in I} \big( \| \widehat{\theta}_{A_j}^{(j)} - \theta^{*(j)}\|_1 - \|\widehat{\theta}_{A_j^c}^{(j)} \|_1\big) \leq \frac{M_q}{a C_{\gamma}} \sum_{j \in I} \big( \| \widehat{\theta}_{B_j}^{(j)} - \theta^{*(j)}\|_1 - \|\widehat{\theta}_{B_j^c}^{(j)} \|_1\big),
    \end{equation*}
    then by Cauchy-Schwartz inequality,
    \begin{equation*}
        \begin{aligned}
            n \kappa^2 \sum_{j \in I} \| \widehat{\theta}_{A_j}^{(j)} - \theta^{* (j)}\|_2^2 & \leq \| \X (\widehat{\theta}_{A_j}^{(j)} - \theta^{* (j)})\|_2^2 \leq \frac{M_q K}{a C_{\gamma}} \sum_{j \in I} \|\widehat{\theta}_{A_j}^{(j)} - \theta^{*(j)} \|_1 \\
            & \leq \frac{M_q K \sqrt{p_1}}{a C_{\gamma}} \sum_{j \in I}  \|\widehat{\theta}_{A_j}^{(j)} - \theta^{*(j)} \|_2 \leq \frac{M_q K \sqrt{2 p_1}}{a C_{\gamma}} \left[ \sum_{j \in I}  \|\widehat{\theta}_{A_j}^{(j)} - \theta^{*(j)} \|_2^2 \right]^{1/2}.
        \end{aligned}
    \end{equation*}
    It gives
    \begin{equation*}
        \sum_{j \in I}  \|\widehat{\theta}_{A_j}^{(j)} - \theta^{*(j)} \|_2^2 \leq \frac{2 p_1 M_q^2 K^2}{a^2 \kappa^4 n^2 C_{\gamma}^2}, \qquad \sum_{j \in I}  \|\widehat{\theta}_{B_j}^{(j)} - \theta^{*(j)} \|_2^2 \leq \frac{4 p_1 M_q^2 K^2}{a^2 \kappa^4 n^2 C_{\gamma}^2},
    \end{equation*}
    where we use that fact $\operatorname{card}(B_j) = 2 p_1$. Furthermore, since
    \begin{equation*}
        \begin{aligned}
             \| \widehat{\theta}^{(j)}_{B_j^c}\|_2^2 \leq \sum_{j \in I} p_1^{-1} {\|\widehat{\theta}^{(j)}_{A_j^c}\|_1^2} \leq \frac{K^2}{p_1} \sum_{j \in I} {\|\widehat{\theta}_{A_j}^{(j)} - \theta^{*(j)} \|_1^2} \leq K^2 \sum_{j \in I} {\|\widehat{\theta}_{A_j}^{(j)} - \theta^{*(j)} \|_2^2},
        \end{aligned}
    \end{equation*}
    we can conclude that
    \begin{equation}\label{thm3.final1}
        \begin{aligned}
            \sum_{j \in I} \| \widehat{\theta}^{(j)} - \theta^{* (j)}\|_2^2 & = \sum_{j \in I} \big( \|\widehat{\theta}_{B_j}^{(j)} - \theta^{*(j)} \|_2^2 + \|\widehat{\theta}^{(j)}_{B_j^c}\|_2^2\big) \\
            & \leq \sum_{j \in I} \big( \|\widehat{\theta}_{B_j}^{(j)} - \theta^{*(j)} \|_2^2 + K^2 \|\widehat{\theta}_{A_j}^{(j)} - \theta^{*(j)} \|_2^2 \big) = \frac{2 p_1 M_q^2 (2 + K^2)  K^2}{a^2 \kappa^4 n^2 C_{\gamma}^2}.
        \end{aligned}
    \end{equation}

    Now we will tickle the situation that $j \notin I$. For $j \notin I$, $K \|\widehat{\theta}_{A_j}^{(j)} - \theta^{*(j)} \|_1 < \| \widehat{\theta}_{A_j^c}^{(j)}\|_1$. Again from (\ref{oracle1}), we have
    \begin{equation*}
        \sum_{j \notin I} \| \X (\widehat{\theta}^{(j)} - \theta^{* (j)})\|_2^2 \leq \frac{M_q K}{a C_{\gamma}} \sum_{j \in I} \|\widehat{\theta}_{A_j}^{(j)} - \theta^{*(j)} \|_1
    \end{equation*}
    and
    \begin{equation*}
        0 \leq \sum_{j \notin I} \big(\| \widehat{\theta}_{A_j^c}^{(j)}\|_1 - K \|\widehat{\theta}_{A_j}^{(j)} - \theta^{*(j)} \|_1 \big) \leq K \sum_{j \in I} \|\widehat{\theta}_{A_j}^{(j)} - \theta^{*(j)} \|_1.
    \end{equation*}
    Indeed, if the two inequalities above has the opposite direction, then for the first one, one can find that
    \begin{equation*}
        \sum_{j \in I} \| \X (\widehat{\theta}^{(j)} - \theta^{* (j)})\|_2^2 \leq \frac{M_q}{a C_{\gamma}} \left[ \sum_{j \notin I}\big( K \|\widehat{\theta}_{A_j}^{(j)} - \theta^{*(j)} \|_1 - \|\widehat{\theta}_{A_j^c}^{(j)}\|_1 \big) - \sum_{j \in I} \| \widehat{\theta}_{A_j^c}^{(j)}\|_1\right] < 0,
    \end{equation*}
    and
    \begin{equation*}
        \sum_{j = 1}^2 \| \X (\widehat{\theta}^{(j)} - \theta^{* (j)})\|_2^2 \leq - \frac{M_q}{a C_{\gamma}} \sum_{j \in I}  \| \widehat{\theta}_{A_j^c}^{(j)}\|_1 < 0.
    \end{equation*}

    Once again, by Cauchy-Schwartz inequality,
    \begin{equation*}
        \sum_{j \in I} \|\widehat{\theta}_{A_j}^{(j)} - \theta^{*(j)} \|_1 \leq \sqrt{p_1} \sum_{j \in I} \|\widehat{\theta}_{A_j}^{(j)} - \theta^{*(j)} \|_2 \leq \sqrt{2 p_1} \left[ \sum_{j \in I} \|\widehat{\theta}_{A_j}^{(j)} - \theta^{*(j)} \|_2^2 \right]^{1 / 2} \leq \frac{2 p_1 M_q K}{a \kappa^2 n C_{\gamma}}.
    \end{equation*}
    Denote $\Delta_j := \| \widehat{\theta}_{A_j^c}^{(j)}\|_1 - K \|\widehat{\theta}_{A_j}^{(j)} - \theta^{*(j)} \|_1$. Then for $j \notin J$, $\Delta_j > 0$, and
    \begin{equation*}
        \sum_{j \notin I} \Delta_j \leq K \sum_{j \in I}  \|\widehat{\theta}_{A_j}^{(j)} - \theta^{*(j)} \|_1 \leq \frac{2 p_1 M_q K^2}{a \kappa^2 n C_{\gamma}}.
    \end{equation*}
    For any $j \notin I$, define
    \begin{equation*}
        \widetilde{\theta}^{(j)} = \widehat{\theta}^{(j)} + \frac{\Delta_j}{p_1 K} \sum_{k \in A_j} \operatorname{sgn} \big(\widehat{\theta}_{k}^{(j)} - \theta_{k}^{* (j)}\big) e_k.
    \end{equation*}
    Then for $k \in A_j$,
    \begin{equation*}
        |\widetilde{\theta}_k^{(j)} - \theta_{k}^{* (j)}| = |\widehat{\theta}_k^{(j)} - \theta_{k}^{* (j)}| + \frac{\Delta_j}{p_1 K},
    \end{equation*}
    while for $k \notin I$, $\widetilde{\theta}^{(j)}_k = \widehat{\theta}^{(j)}_k$. Therefore,
    \begin{equation*}
        K \| \widetilde{\theta}_{A_j}^{(j)} - \theta^{*(j)}\|_1 = K \left[ \| \widehat{\theta}_{A_j}^{(j)} - \theta^{*(j)}\|_1 + \sum_{k \in A_j} \frac{\Delta_j}{p_1 K}\right] = \| \widehat{\theta}_{A_j^c}^{(j)} \|_1 = \| \widetilde{\theta}_{A_j^c}^{(j)} \|_1,
    \end{equation*}
    and consequently $\| \widetilde{\theta}_{B_j^c}^{(j)} \|_1 \leq K \|\widetilde{\theta}_{B_j}^{(j)} - \theta^{* (j)} \|_1$. Once again by the restricted eigenvalue condition,
    \begin{equation}\label{oracle2}
        \| \X (\widetilde{\theta}^{(j)} - \theta^{* (j)})\|_2^2 \geq n \kappa^2 \| \widetilde{\theta}_{B_j}^{(j)} - \theta^{* (j)}\|_2^2 \geq  n \kappa^2 \| \widetilde{\theta}_{A_j}^{(j)} - \theta^{* (j)}\|_2^2.
    \end{equation}

    On the other hand, note that for any $s, t \in \mathbb{R}^m$ inequality $\|s + t\|_2^2 \leq 2 (\|s\|_2^2 + \|t\|_2^2)$ and $\|s\|_2 \leq \|s \|_1 \leq \sqrt{m} \|s\|_2$ hold, we conclude
    \begin{equation}\label{oracle3}
        \begin{aligned}
            \sum_{j \notin I} \| \X (\widetilde{\theta}^{(j)} - \theta^{* (j)}) \|_2^2 & \leq 2 \sum_{j \notin I} \Big( \| \X (\widehat{\theta}^{(j)} - \theta^{* (j)}) \|_2^2 + \| \X (\widehat{\theta}^{(j)} - \widetilde{\theta}^{(j)}) \|_2^2 \Big) \\
            & \leq \frac{2 M_q K}{a C_{\gamma}} \sum_{j \in I} \|\widehat{\theta}_{A_j}^{(j)} - \theta^{ * (j)} \|_1 + 2 \sum_{j \notin I} \| \X (\widehat{\theta}^{(j)} - \widetilde{\theta}^{(j)}) \|_2^2 \\
            & \leq \frac{4 p_1 M_q^2 K^2}{n a^2 \kappa^2 C_{\gamma}^2} + 2 \sum_{j \notin I} \| \X (\widehat{\theta}^{(j)} - \widetilde{\theta}^{(j)}) \|_2^2.
        \end{aligned}
    \end{equation}
    Next, we will use the definition of the $p_1$-restricted isometry constant $\sigma_{\X, l}^2$. Since $\spt \big( \widetilde{\theta}^{(j)} - \widehat{\theta}^{(j)} \big) \leq \operatorname{card} (A_j) = p_1$, then
    \begin{equation*}
        \begin{aligned}
            \sum_{j \notin I} \| \X (\widehat{\theta}^{(j)} - \widetilde{\theta}^{(j)}) \|_2^2 & \leq \sigma_{\X, p_1}^2 \sum_{j \notin I} \| \widehat{\theta}^{(j)} - \widetilde{\theta}^{(j)} \|_2^2 \\
            & = \sigma_{\X, p_1}^2 \sum_{j \notin I} \sum_{k \in A_j} \Big( \frac{\Delta_j}{p_1 K}\Big)^2 = \frac{\sigma_{\X, p_1}^2}{p_1 K^2} \sum_{j \notin I} \Delta_j^2 \\
            & \leq \frac{\sigma_{\X, p_1}^2}{p_1 K^2} \Big( \sum_{j \notin I} \Delta_j \Big)^2 \leq \frac{4 p_1 \sigma_{\X, p_1}^2 K^2}{a^2 \kappa^4 n^2 C_{\gamma}^2}.
        \end{aligned}
    \end{equation*}
    The above inequality together with (\ref{oracle2}) and (\ref{oracle3}) gives
    \begin{equation*}
        \sum_{j \notin I} \| \widetilde{\theta}_{A_j}^{(j)} - \theta^{* (j)}\|_2^2 \leq \sum_{j \notin I} \| \widetilde{\theta}_{B_j}^{(j)} - \theta^{* (j)}\|_2^2 \leq \frac{4 p_1 (n \kappa^2 + 2 \sigma_{\X, p_1}^2) M_q^2 K^2}{a^2 C_{\gamma}^2 n^3 \kappa^6}.
    \end{equation*}
    Finally, since
    \begin{equation*}
        \| \widetilde{\theta}_{B_j^c}^{(j)}\|_2^2 \leq \| \widetilde{\theta}_{B_j^c}^{(j)}\|_1^2 \leq K^2  \| \widetilde{\theta}_{B_j}^{(j)} - \theta^{* (j)}\|_1^2 \leq 2 p_1 K^2 \| \widetilde{\theta}_{B_j}^{(j)} - \theta^{* (j)}\|_2^2,
    \end{equation*}
    we obtain that
    \begin{equation}\label{thm3.final2}
        \begin{aligned}
            \sum_{j \notin I} \| \widehat{\theta}^{(j)} - \theta^{* (j)}\|_2^2 & \leq \sum_{j \notin I} \| \widetilde{\theta}^{(j)} - \theta^{* (j)}\|_2^2 = \sum_{j \notin I} \big(  \| \widetilde{\theta}_{B_j}^{(j)} - \theta^{* (j)}\|_2^2 + \| \widetilde{\theta}_{B_j^c}^{(j)}\|_2^2 \big) \\
            & \leq (1 + 2p_1 K)  \sum_{j \notin I} \| \widetilde{\theta}_{B_j}^{(j)} - \theta^{* (j)}\|_2^2 \leq \frac{4 p_1 (1 + 2p_1 K) (n \kappa^2 + 2 \sigma_{\X, p_1}^2) M_q^2 K^2}{a^2 C_{\gamma}^2 n^3 \kappa^6}.
        \end{aligned}
    \end{equation}
    Combining (\ref{thm3.final1}) and (\ref{thm3.final2}), it is easy to see the remaining 
\end{proof}

\bibliographystyle{chicago}
\bibliography{reference}
\end{document}